\documentclass[a4paper,twocolumn,11pt,accepted=2024-11-06,aps]{quantumarticle}
\pdfoutput=1
\usepackage[utf8]{inputenc}
\usepackage[english]{babel}
\usepackage[T1]{fontenc}
\usepackage{amsmath}
\usepackage{hyperref}

\usepackage[numbers,compress]{natbib}

\usepackage{tikz}

\usepackage{braket}
\usepackage{amsmath,amssymb,amsfonts}
\usepackage{amsthm}
\usepackage{mathtools}
\usepackage{graphicx}
\usepackage{xcolor}
\usepackage{framed}

\usepackage[capitalize,nameinlink]{cleveref}
\usepackage{comment}

\newcommand{\be}{\begin{equation}}\newcommand{\ee}{\end{equation}}\newcommand{\ba}{\begin{eqnarray}}\newcommand{\ea}{\end{eqnarray}}\newcommand{\ban}{\begin{eqnarray*}}\newcommand{\ean}{\end{eqnarray*}}

\newtheorem{lemma}{Lemma}

\theoremstyle{definition}

\theoremstyle{remark}

\newcommand{\mM}{\mathcal{M}}

\def\openone{\mathds{1}}

\newcommand{\tS}{\widetilde{S}}

\newcommand{\D}{\mathcal{D}}

\newcommand{\ketbra}[1]{|#1\rangle\!\!\!\; \langle #1 |}
\newcommand{\tr}[1]{\operatorname{Tr}\!\left[#1\right]}

\usepackage{dsfont}
\usepackage{enumitem}

\newcommand{\Tr}{\mathrm{Tr}}

\newcommand{\mMR}{\widetilde{\mathcal{M}}}
\newcommand{\DBS}{D_{\rm BS}}

\newcommand{\numU}{1}
\newcommand{\numDQ}{2}
\newcommand{\numBS}{3}

\newcommand{\SigmaU}{\Sigma_{\mathsf{M},\gamma}^{(\numU)}}
\newcommand{\SUmegaki}{S_{\mathsf{M},\gamma}^{(\numU)}}
\newcommand{\SigmaBS}{\Sigma_{\mathsf{M},\gamma}^{(\numBS)}}
\newcommand{\SBS}{S_{\mathsf{M},\gamma}^{(\numBS)}}
\newcommand{\SigmaDQ}{\Sigma_{\mathsf{M},\gamma}^{(\numDQ)}}
\newcommand{\SDQ}{S_{\mathsf{M},\gamma}^{(\numDQ)}}
\newcommand{\Scl}{S^{\textrm{clax}}_{\mathsf{M},\gamma}}

\begin{document}

\title{Observational entropy with general quantum priors}

\author{Ge Bai}
\affiliation{Centre for Quantum Technologies, National University of Singapore, 3 Science Drive 2, Singapore 117543}
\orcid{0000-0002-6814-8840}

\author{Dominik \v{S}afr\'anek}
\affiliation{Center for Theoretical Physics of Complex Systems, Institute for Basic Science (IBS), Daejeon 34126, Korea}
\affiliation{Basic Science Program, Korea University of Science and Technology (UST), Daejeon - 34113, Korea}
\orcid{0000-0002-6861-395X}

\author{Joseph Schindler}
\affiliation{F\'{\i}sica Te\`{o}rica: Informaci\'{o} i Fen\`{o}mens Qu\`{a}ntics, Departament de F\'{\i}sica, Universitat Aut\`{o}noma de Barcelona, 08193 Bellaterra, Spain}
\orcid{0000-0002-8799-9800}

\author{Francesco Buscemi}\email{buscemi@nagoya-u.jp}
\affiliation{Department of Mathematical Informatics, Nagoya University, Furo-cho Chikusa-ku, Nagoya 464-8601, Japan}
\orcid{0000-0001-9741-0628}

\author{Valerio Scarani}\email{physv@nus.edu.sg}
\affiliation{Centre for Quantum Technologies, National University of Singapore, 3 Science Drive 2, Singapore 117543}
\affiliation{Department of Physics, National University of Singapore, 2 Science Drive 3, Singapore 117542}
\orcid{0000-0001-5594-5616}

\begin{abstract}
Observational entropy captures both the intrinsic uncertainty of a thermodynamic state and the lack of knowledge due to coarse-graining.  We demonstrate two interpretations of observational entropy, one as the statistical deficiency resulting from a measurement, the other as the difficulty of inferring the input state from the measurement statistics by quantum Bayesian retrodiction. These interpretations show that the observational entropy implicitly includes a uniform reference prior. Since the uniform prior cannot be used when the system is infinite-dimensional or otherwise energy-constrained, we propose generalizations by replacing the uniform prior with arbitrary quantum states that may not even commute with the state of the system. We propose three candidates for this generalization, discuss their properties, and show that one of them gives a unified expression that relates both interpretations. 
\end{abstract}

\maketitle

\section{Introduction}

A few pages after defining the entropy that nowadays bears his name, von Neumann warns the reader that the quantity that he just defined is, in fact, unable to capture the phenomenological behavior of thermodynamic entropy~\cite{von1955mathematical}. More precisely, while the von Neumann entropy $S(\rho):=-\tr{\rho\ln\rho}$ is always invariant in a closed system as a consequence of its invariance under unitary evolutions, the thermodynamic entropy of a closed system can instead increase, as it happens for example in the free expansion of an ideal gas. The explanation that von Neumann gives for this apparent paradox is the following: thermodynamic entropy includes not only the intrinsic ignorance associated with the microscopic state $\rho$ of the system, but also the lack of knowledge arising from a macroscopic coarse-graining of it. The latter lack of knowledge becomes worse as the gas expands. This observation leads him to introduce an alternative quantity, that he calls \textit{macroscopic entropy}, for which an $H$-theorem can be proved~\cite{vonNeumann1929translation}.

In recent years, von Neumann's macroscopic entropy and a generalization thereof called \textit{observational entropy} (OE) has been the object of renewed interest~\cite{safranek2019a,safranek2019b,safranek2021brief,SW21,safranek2021generalized,buscemi2022observational}, finding a number of applications~\cite{riera2020finite,safranek2020classical,deutsch2020probabilistic,faiez2020typical,nation2020snapshots,strasberg2021clausius,hamazaki2022speed,modak2022observational,sreeram2023witnessing,schindler2023continuity,safranek2023work,safranek2023measuring,safranek2023ergotropic}. So far, even when the narrative is based on a quantum state $\rho$ being subject to a measurement $\mathsf{M}$, all the definitions fit in classical stochastic thermodynamics.

In this paper, we explore possible generalizations of OE. We note that the original OE includes an implicit prior belief about the state, which is the uniform distribution. Since in several applications the uniform prior cannot be used, e.g., in infinite-dimensional or continuous variable systems, or does not play well with other physical constraints, e.g., in thermodynamic systems with a nondegenerate Hamiltonian at finite temperature, we allow the observer to have a non-uniform prior. More generally, we consider the possibility that the observer has a reference prior described by an arbitrary density operator, which may not even commute with the state of the system. In this case, classical probability distributions may not be sufficient to describe the non-commutativity between the state and the reference, and thus the original definition of OE is not applicable.

\section{Classical OE and reference states}

In what follows, we restrict our attention to finite-dimensional quantum systems, with Hilbert space $\mathbb{C}^d$, and finite measurements, i.e., positive operator-valued measures (POVMs) $\mathsf{M}=\{\Pi_y\}_{y}$ labeled by the elements of a finite set $y\in\{1,\dots,m\}$. In this context, the definition of OE is
\ba\label{eq:original}
S_{\mathsf{M}}(\rho)\coloneqq-\sum_{y=1}^m p_y\,\ln\frac{p_y}{V_y}\;,
\ea
where $p_y:=\tr{\rho \Pi_y}$ and $V_y:=\tr{\Pi_y}$. One of the conceptual advantages of OE is that it is able to ``interpolate'' between Boltzmann and Gibbs--Shannon entropies. On the one hand, if the measurement is so coarse-grained that one of its elements (say $\Pi_1$) is the projector on the support of $\rho$, then $S_{\mathsf{M}}(\rho)=\ln V_1$ takes the form of a Boltzmann entropy. If, on the other hand, the measurement is projective and rank-one (i.e., $V_y=1$ for all $y$), then $S_{\mathsf{M}}(\rho)$ coincides with the Shannon entropy of the probability distribution $\{p_y\}$, which is equal to $S(\rho)$ when $\rho=\sum_y p_y \Pi_y$.

In general, it holds that~\cite{safranek2019b}
\begin{align}\label{Sigma0}
    \Sigma_{\mathsf{M}}(\rho) \coloneqq S_{\mathsf{M}}(\rho)-S(\rho)\ge 0\,.
\end{align}
If $S(\rho)$ represents, in von Neumann's original narrative, the \textit{least} uncertainty that an observer, able to perform \textit{any} measurement in principle allowed by quantum theory, has about the state of the system, then the additional uncertainty $\Sigma_{\mathsf{M}}(\rho)$ included in OE is a consequence of observing the system through the ``lens'' provided by the given measurement $\mathsf{M}$. Thus, in this sense, OE can be seen as a measure of how inadequate a given measurement $\mathsf{M}$ is with respect to the state $\rho$.

\subsection{OE from statistical deficiency}\label{sec:OE-deficiency}

The above discussion suggests one possible generalization of OE, starting from the re-writing of~\eqref{Sigma0} recently noticed by some of us~\cite{buscemi2022observational}. Consider the \textit{measurement channel}
$\mM$ associated to the measurement $\mathsf{M}$, defined as
\ba \label{eq:measurement_channel}
\mM(\rho) \coloneqq \sum_y \Tr[\Pi_y \rho] \ketbra{y}\;,
\ea
where $\{\ket{y}\}$ is an arbitrary but fixed orthonormal basis of the system that records the measurement outcome. By further noticing that $V_y=d\tr{\Pi_y u}$ with $u=\openone/d$ the maximally mixed state, one obtains 
\begin{align}\label{SigmaBSS}
    \Sigma_{\mathsf{M}}(\rho) = D(\rho\|u)-D(\mathcal{M}(\rho)\|\mathcal{M}(u))\;,
\end{align}
where
\begin{align} \label{eq:D_Umegaki}
    D(\rho\|\sigma) \coloneqq \Tr[\rho(\ln\rho-\ln\sigma)]
\end{align} is the Umegaki quantum relative entropy between states $\rho$ and $\sigma$ \cite{umegaki,umegaki1962conditional}, which generalizes the relative entropy (a.k.a.~Kullback-Leibler divergence)
\begin{align}\label{eq:KL}
    D(p\|q) \coloneqq \sum_i p_i \ln \frac{p_i}{q_i}
\end{align} between probability distributions $\{p_i\}$ and $\{q_i\}$ \cite{kullback1951}. 

The expression~\eqref{SigmaBSS} makes it clear that the quantity $\Sigma_{\mathsf{M}}(\rho)$ exactly equals the loss of distinguishability between the signal $\rho$ and the totally uniform background $u$ that occurs when the measurement $\mathsf{M}$ is used instead of the best possible measurement allowed by quantum theory. In statistical jargon, we thus say that $\Sigma_{\mathsf{M}}(\rho)$ measures the \textit{statistical deficiency} of the measurement $\mathsf{M}$ in distinguishing $\rho$ against $u$.

This observation enlightens something implicit in the original definition \eqref{eq:original} of OE: the coarse-graining is captured by the ``volumes'' $V_y=\tr{\Pi_y}$ only because the maximally mixed state is chosen as the reference background. It is thus natural to try to incorporate more general references in the definition of OE. A direct generalization could be obtained, therefore, by replacing $u$ with another reference state $\gamma$ in~\eqref{SigmaBSS}, so that
\begin{align}\nonumber
S_{\mathsf{M},\gamma}(\rho) &\coloneqq S(\rho) + \Sigma_{\mathsf{M},\gamma}(\rho)\\
    &\coloneqq S(\rho)+\mathfrak{D}(\rho\|\gamma)-D(\mathcal{M}(\rho)\|\mathcal{M}(\gamma)) \,, \label{eq:sigma_rho_gamma_initial}
\end{align}
where $\mathfrak{D}(\cdot\|\cdot)$ represents some non-commutative generalization of the Kullback--Leibler divergence, not necessarily Umegaki's one.

\subsection{OE from irretrodictability}

There exists another evocative re-writing of \eqref{Sigma0}, which in turn suggests that further structures may play a role in the definition of OE. Specifically, here we exhibit a \textit{dynamical} interpretation of OE, based on a measurement process defined as follows.

Let $\rho=\sum_{x=1}^d\lambda_x\ketbra{\psi_x}$ be a diagonal decomposition of the state of the system. We consider a stochastic map associated to a prepare-and-measure protocol: with probability $\lambda_x$, the state $\ket{\psi_x}$ is prepared, and it is then measured with the POVM $\mathsf{M}=\{\Pi_y\}_y$, yielding outcome $y$ with a probability given by the Born rule, that is
\begin{align}
P_F(y|x) &\coloneqq \bra{\psi_x}\Pi_y\ket{\psi_x}\,,\\
P_F(x,y) &\coloneqq \lambda_x\bra{\psi_x}\Pi_y\ket{\psi_x}\,.
\label{eq:forward-prepare-and-measure-process}
\end{align}
The subscript $F$ stands for ``forward''. This is because, as we will see in what follows, the quantity $\Sigma_{\mathsf{M}}(\rho)$ in~\eqref{Sigma0} emerges also from a comparison between the forward map defined above and a suitably defined ``reverse'' map.

Traditionally, the definition of the reverse process, given a forward process, relies on a detailed knowledge of the physical dynamics involved~\cite{crooks-theorem,esposito-RMP-review,landi-RMP-review}. In the absence of such knowledge, which is typically the case for a system interacting with a complex environment, one must resort to physical intuition, plausibility arguments, or, failing that, arbitrary assumptions. In order to avoid all this, a systematic recipe has recently been found~\cite{BS21,AwBS}, which allows to define the reverse process only from unavoidable rules of \textit{logical retrodiction}: specifically, Jeffrey's theory of probability kinematics~\cite{jeffrey} or, equivalently, Pearl's virtual evidence method~\cite{pearl,CHAN200567}.

The idea is as follows: given a forward conditional probability $P_F(y|x)$, how should information, obtained at later times about the final outcome and encoded in an arbitrary distribution $q_y$, be propagated back to the initial state in a way that ensures logical consistency? Jeffrey's theory of probability kinematics, which is equivalent to Pearl's virtual evidence method~\cite{pearl,CHAN200567}, stipulates that the only logically consistent back-propagation rule is what is now known as \textit{Jeffrey's update}: starting from an arbitrarily chosen reference prior on the initial state $x$, say $\gamma_x$, one constructs the Bayesian inverse of $P_F(y|x)$, i.e.,
\begin{align}\label{eq:retrodictive-channel}
    P_R^{\gamma}(x|y):=\frac{\gamma_xP_F(y|x)}{\sum_{x'}\gamma_{x'} P_F(y|{x'})}\,,
\end{align}
and uses that as a \textit{stochastic channel} to back-propagate the new information $q_y$ from the final outcome $y$ to the initial state $x$, so that as the reverse process we obtain
\begin{align}\label{eq:clax-retro}
 P_R^{\gamma}(y,x)=q_y\,P_R^{\gamma}(x|y)\;.   
\end{align}
Jeffrey's update constitutes a generalization of Bayes' theorem, as the latter is recovered as a special case of the former when $q_y=\delta_{y,y_0}$, i.e., when the information about the final outcome is definite~\cite{CHAN200567}.

An important point to emphasize here is that the reference prior $\gamma_x$ used to construct the retrodictive channel in~\eqref{eq:retrodictive-channel} is merely a formal device needed to establish a mathematical correspondence between forward and backward process: it need not be related in any way with the ``true'' distribution $\lambda_x$. Likewise, the distribution $q_y$ represents \textit{new and completely arbitrary} information, which need not correspond to any input distribution under $P_F$, that is, there may exist no distribution $q'_x$ such that $q_y=\sum_xP_F(y|x)q'_x$. Moreover, in principle, $q_y$ may also be incompatible with the reference prior $\gamma_x$, in the sense that it could happen that, for some $y$, $q_y>0$ but $\sum_x\gamma_xP_F(y|x)=0$. In such a situation, one would conclude that the data \textit{falsify} the inferential model, but for simplicity we will avoid such cases by assuming that all probabilities are strictly greater than zero (though possibly arbitrarily small).

We now go back to our specific forward process~\eqref{eq:forward-prepare-and-measure-process}, i.e., $P_F(y|x)=\bra{\psi_x}\Pi_y\ket{\psi_x}$. If we choose as reference the uniform distribution $\gamma=u$, i.e.,~$\gamma_x=1/d$ for all $x$, and as new information information the outcomes' expected probability of occurrence, i.e., $q_y=p_y=\Tr[\rho\Pi_y]$, by direct substitution in~\eqref{eq:retrodictive-channel} and~\eqref{eq:clax-retro}, we obtain
\begin{align}\label{eq:reverse-simple}
     P^u_R(y,x) = p_y\bra{\psi_x}\frac{\Pi_y}{\tr{\Pi_y}}\ket{\psi_x}\;.
\end{align}
The above can also be read as a prepare-and-measure process, in which the state $\sigma_y\coloneqq \frac{\Pi_y}{\tr{\Pi_y}}$ is prepared with probability $p_y$, and later measured in the basis $\{\ket{\psi_x}\}$. The process in~\eqref{eq:reverse-simple} is the process that a retrodictive agent would infer, knowing only the forward process~\eqref{eq:forward-prepare-and-measure-process} and the outcome distribution $p_y$, but completely ignoring the actual distribution $\lambda_x$, so that the latter is replaced by the uniform distribution.

Using \eqref{eq:forward-prepare-and-measure-process} and \eqref{eq:reverse-simple}, it is straightforward to check that \begin{align}\label{eq:classical_D=S}
    \Sigma_{\mathsf{M}}(\rho)=D(P_F\|P^u_R)\;.
\end{align}
The above relation suggests an alternative interpretation for the difference $\Sigma_{\mathsf{M}}(\rho)$, as the degree of statistical distinguishability between a predictive process, i.e., $P_F(x,y)$, and a retrodictive process constructed from a uniform reference, i.e., $P_R^u(y,x)$. Thus, the larger $\Sigma_{\mathsf{M}}(\rho)$, the more \textit{irretrodictable} the process becomes~\cite{watanabe55,watanabe65}. 

Eq.~\eqref{eq:classical_D=S} also offers an alternative way of thinking about generalizations of OE, where the uniform reference is again replaced by an arbitrary state, as was done in Section~\ref{sec:OE-deficiency}, but this time for the purpose of constructing another reverse process. That is, one could also consider generalizations such as
\begin{align}\label{eq:Sj}
    \tS_{\mathsf{M},\gamma}(\rho)\coloneqq S(\rho)+\mathfrak{D}(Q_F\|Q_R^\gamma)\;,
\end{align}
where $\mathfrak{D}$ is again some quantum relative entropy (not necessarily Umegaki's), $Q_F$ is an input-output description of the quantum process consisting in preparing the state $\rho$ and measuring it with the channel $\mathcal{M}$, and $Q_R^\gamma$ is the description of the corresponding reverse process computed with respect to the reference prior $\gamma$. All these ingredients will be rigorously defined in Section~\ref{sec:IV}.

\section{A definition of OE for priors $\gamma$ such that $[\rho,\gamma]=0$}

As we have seen, in the case of conventional OE, the statistical deficiency approach and the irretrodictability approach coincide, i.e.
\begin{align*}
    \Sigma_{\mathsf{M}}(\rho)&=D(\rho\|u)-D(\mathcal{M}(\rho)\|\mathcal{M}(u))=D(P_F\|P_R^u)\,.
\end{align*}
The same holds for any prior $\gamma$ that commutes with $\rho$. Indeed, assuming $[\rho,\gamma]=0$, let us write $\gamma=\sum_{x=1}^d\gamma_x\ketbra{\psi_x}$ using the same vectors that diagonalize $\rho$. The reverse process of $P_F$ [Eq.~\eqref{eq:forward-prepare-and-measure-process}] becomes
\begin{align}\label{eq:retrodictive-channel-commutative-gamma}
    P_R^{\gamma}(x|y)&\coloneqq\frac{\gamma_x \bra{\psi_x} \Pi_y \ket{\psi_x}}{\Tr[\Pi_y \gamma]} \\
    P_R^{\gamma}(x,y)&= p_y P_R^{\gamma}(x|y)\nonumber\,,
\end{align}
and it is straightforward to verify that
\begin{align} \label{eq:D-D=DP}
    D(\rho\|\gamma)-D(\mathcal{M}(\rho)\|\mathcal{M}(\gamma)) = D(P_F||P_R^\gamma) \,.
\end{align}
Therefore, when $[\rho,\gamma]=0$, the expression
\begin{align}
    &\Scl(\rho)\nonumber\\&\coloneqq\;S(\rho)+D(\rho\|\gamma)-D(\mathcal{M}(\rho)\|\mathcal{M}(\gamma))\label{Scl1}\\
    &=\;S(\rho)+D(P_F||P_R^\gamma)\label{Scl2}\\
    &=-\Tr[\rho\;\ln\gamma]-D(\mathcal{M}(\rho)\|\mathcal{M}(\gamma))\label{eq:clax-gen-OE}\;.
\end{align}
is a generalized OE that fits both the statistical deficiency approach and the irretrodictability approach. As we are going to see, it is not obvious to ensure both interpretations when $[\rho,\gamma]\neq 0$.

Note that nothing has been said about the measurement $\mathsf{M}$, which may well not commute with either $\rho$ or $\gamma$. As a case study, let us look closely to the \textit{fully classical case} in which, besides having $[\rho,\gamma]=0$, the measurement is a projective measurement in their same eigenbasis $\{\ket{k}\}$:
\begin{align}
    \mathsf{M}&=\{\Pi_y\}_{y}\quad\textrm{ with }\quad \Pi_y=\sum_{k\in K(y)}\ket{k}\bra{k}
\end{align} where the index sets $K(y)$ are disjoint and complete as to form a POVM, i.e.~$\bigcup_y K(y)={1,...,d}$, and $K(y)\cap K(y')=\emptyset$ for $y\neq y'$. Then, denoting by $\{r_k\}$ and $\{g_k\}$ the eigenvalues of $\rho$ and $\gamma$ respectively, we have $p_y=\sum_{k\in K(y)} r_k$ and $\tr{\gamma\Pi_y}\equiv G_y=\sum_{k\in K(y)} g_k$, and Eq.~\eqref{eq:clax-gen-OE} yields 
\begin{align}
    \Scl(\rho)&=-\sum_k r_k\log g_k-D(\{p_y\}||\{G_y\})\,.
\end{align} While the second term depends only on the observed statistics $\{p_y\}$ by construction, the first term depends in general on the full information $\{r_k$\}. In fact, $\Scl$ depends only on the $\{p_y\}$ if and only if the $g_k$ are uniform in each $K(y)$ subspace; that is, $g_k=G_y/|K(y)|$ for every $k\in K(y)$ and for every $y$, or, equivalently, $\gamma=\sum_y G_y\Pi_y/|K(y)|$. In this case, $\Scl(\rho)=S_{\mathsf{M}}(\rho)$: the weights $G_y$ of the prior do not matter. The interpretation of this observation is clear: the observation gives precisely the weights $p_y$ to be attributed to each $\Pi_y$, trumping any prior belief on those weights.

When the dependence on the $\{r_k\}$ is present, it is a mild one: for instance, in the paradigmatic case where $\gamma$ is thermal, the term $-\sum_k r_k\log g_k$ is (up to an additive constant) the average energy, often assumed as known in thermodynamics. A purely ``observational'' character could be recovered with minor modifications of the definition; we leave this aside.

\section{Definitions of OE with an arbitrary quantum reference prior}\label{sec:IV}

In this section, we introduce the mathematical notations and backgrounds, and propose some candidates for OE with a general reference prior state.

\subsection{Input-output description of quantum processes}

Let $A$ and $B$ respectively denote the input and output systems of the measurement channel $\mM$ in Eq.~\eqref{eq:measurement_channel}. The general recipe for the retrodiction of a quantum process $\mM$
\cite{BS21,buscemi2022observational,Parzygnat2023axiomsretrodiction}
is defined via its Petz recovery map \cite{petz1,petz}  as
\ba \label{eq:reverse_process}
\mMR^\gamma(\tau) \coloneqq \gamma^{1/2} \mM^\dag(\mM(\gamma)^{-1/2} \tau \mM(\gamma)^{-1/2} ) \gamma^{1/2}
\ea
where $\tau := \sum_y q_y\ketbra{y}$ encodes the distribution $\{q_y\}$, describing the retrodictor's knowledge, cfr. Eq.~\eqref{eq:clax-retro}. We will mainly discuss the case where $\tau = \mM(\rho)$, namely $q_y=p_y=\Tr[\Pi_y \rho]$. For the measurement channel given in \eqref{eq:measurement_channel}, the Petz recovery map can be written as
\begin{align}
    \mMR^\gamma(\tau) = \sum_y \frac{\bra{y}\tau\ket{y}}{\Tr[\Pi_y \gamma]} \sqrt{\gamma}\Pi_y\sqrt{\gamma} \,.
\end{align}

As an ingredient for later constructions, we introduce the Choi operator \cite{choi1975completely}, defined for the process $\mM$ from system $A$ to system $B$  as 
\begin{align}\label{eq:direct-choi}
C_{\mM} &\coloneqq \sum_{i,j} \mM(\ket{i}\bra{j}) \otimes \ket{i}\bra{j} \,,
\end{align}
where $\ket{i}$ and $\ket{j}$ belong to an arbitrary but fixed orthonormal basis of the input Hilbert space $\mathcal{H}_A$ of system $A$. The reverse process has the following Choi operator
\begin{align}\label{eq:reverse-choi}
    C_{\mMR^\gamma} &\coloneqq \sum_{k,l}\ket{k}\bra{l} \otimes \mMR^\gamma(\ket{k}\bra{l}) \,,
\end{align}
where $\ket{k}$ and $\ket{l}$ belong to an arbitrarily fixed orthonormal basis of the Hilbert space $\mathcal{H}_B$.
Note that we put system $B$ first and system $A$ second, in order to have the same ordering of systems for both $C_{\mM}$ and $C_{\mMR^\gamma}$.

With such a definition, the Choi operators of the forward and reverse processes are related by the following lemma (proof in \cref{app:proof_choi}):
\begin{lemma} \label{lem:choi}
    For a quantum channel $\mM$ and its Petz map $\mMR^\gamma$, their Choi operators $C_{\mM}$ and $C_{\mMR^\gamma}$ are related as
    \begin{align} 
    &C_{\mMR^\gamma}^T=\label{eq:relation_Chois}\\
      &  \left(\mM(\gamma)^{-1/2}\otimes \sqrt{\gamma^{T}}\right) C_{\mM} \left(\mM(\gamma)^{-1/2}\otimes \sqrt{\gamma^{T}}\right)\;,\nonumber
    \end{align}
    where the superscript ${\bullet}^T$ denotes the matrix transposition done with respect to the fixed bases used in Eqs.~\eqref{eq:direct-choi} and~\eqref{eq:reverse-choi}.
\end{lemma}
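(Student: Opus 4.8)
The plan is to compute both sides of~\eqref{eq:relation_Chois} directly from the definitions and check they agree. First I would write out the Petz map explicitly: for the measurement channel $\mM(\rho)=\sum_y\Tr[\Pi_y\rho]\ketbra{y}$ we have $\mM^\dag(\ketbra{y})=\Pi_y$ and $\mM(\gamma)=\sum_y\Tr[\Pi_y\gamma]\ketbra{y}$, which is diagonal in the $\{\ket{y}\}$ basis, so $\mM(\gamma)^{-1/2}=\sum_y\Tr[\Pi_y\gamma]^{-1/2}\ketbra{y}$. Feeding $\ket{k}\bra{l}$ (with $\ket k,\ket l$ in the output basis $\mathcal H_B$, i.e.\ the $\{\ket{y}\}$ basis) through $\mMR^\gamma$ gives $\mMR^\gamma(\ketbra{k}{l})=\delta_{kl}\,\Tr[\Pi_k\gamma]^{-1}\sqrt\gamma\Pi_k\sqrt\gamma$, since $\bra{k}\tau\ket{l}$ picks out the $(k,l)$ matrix element and $\bra{y}\ket{k}\bra{l}\ket{y}=\delta_{yk}\delta_{yl}$. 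Hence
\begin{align}\label{eq:plan-Creverse}
  C_{\mMR^\gamma}=\sum_k \ketbra{k}\otimes \frac{\sqrt\gamma\,\Pi_k\,\sqrt\gamma}{\Tr[\Pi_k\gamma]}\,,
\end{align}
and transposing in the fixed bases, using $\ketbra{k}^T=\ketbra{k}$ and $(\sqrt\gamma\Pi_k\sqrt\gamma)^T=\sqrt{\gamma^T}\,\Pi_k^T\,\sqrt{\gamma^T}$ (because $\sqrt{\gamma^T}=(\sqrt\gamma)^T$ as $\gamma$ is positive), yields $C_{\mMR^\gamma}^T=\sum_k\ketbra{k}\otimes\Tr[\Pi_k\gamma]^{-1}\sqrt{\gamma^T}\Pi_k^T\sqrt{\gamma^T}$.

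Next I would expand the right-hand side. Since $\mM(\ketbra{i}{j})=\sum_y\Tr[\Pi_y\ketbra{i}{j}]\ketbra{y}=\sum_y\bra{j}\Pi_y\ket{i}\ketbra{y}$, the direct Choi operator is $C_{\mM}=\sum_y\ketbra{y}\otimes\bigl(\sum_{i,j}\bra{j}\Pi_y\ket{i}\,\ketbra{i}{j}\bigr)=\sum_y\ketbra{y}\otimes\Pi_y^T$. Now conjugating by $\mM(\gamma)^{-1/2}\otimes\sqrt{\gamma^T}$ and using that $\mM(\gamma)^{-1/2}$ is diagonal so that $\mM(\gamma)^{-1/2}\ketbra{y}\mM(\gamma)^{-1/2}=\Tr[\Pi_y\gamma]^{-1}\ketbra{y}$, I get $\bigl(\mM(\gamma)^{-1/2}\otimes\sqrt{\gamma^T}\bigr)C_{\mM}\bigl(\mM(\gamma)^{-1/2}\otimes\sqrt{\gamma^T}\bigr)=\sum_y\Tr[\Pi_y\gamma]^{-1}\ketbra{y}\otimes\sqrt{\gamma^T}\Pi_y^T\sqrt{\gamma^T}$, which matches the expression for $C_{\mMR^\gamma}^T$ term by term. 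That closes the proof.

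Two bookkeeping points deserve care, and these are the only real obstacles. First, the ordering convention: the paper deliberately puts system $B$ first in both $C_{\mM}$ and $C_{\mMR^\gamma}$ (so $C_{\mM}$ uses $\mM(\ketbra ij)\otimes\ketbra ij$ with the output factor first), and one must track this consistently so the tensor factors in~\eqref{eq:relation_Chois} line up — the left factor acting on $\mathcal H_B$ is $\mM(\gamma)^{-1/2}$ and the right factor on $\mathcal H_A$ is $\sqrt{\gamma^T}$. Second, the transpose is basis-dependent and is taken with respect to the \emph{fixed} bases of Eqs.~\eqref{eq:direct-choi} and~\eqref{eq:reverse-choi}; I would note that $(\Pi_y)^T$ denotes this basis transpose and that $\sqrt{\gamma^T}=(\sqrt\gamma)^T$ holds because transposition is an algebra anti-automorphism that preserves positivity and the square root is a limit of polynomials. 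A fully general (non-measurement) proof would instead use the identity $(\one\otimes X)\,C_{\mM}\,(\one\otimes X^T)=C_{\mM\circ\mathrm{Ad}_{X}}$-type manipulations together with $C_{\mM^\dag}=\mathrm{SWAP}\,C_{\mM}^T\,\mathrm{SWAP}$, but since Lemma~\ref{lem:choi} is stated for a general channel $\mM$, I would present the argument at that level: write $C_{\mMR^\gamma}=\sum_{k,l}\ketbra kl\otimes\gamma^{1/2}\mM^\dag(\mM(\gamma)^{-1/2}\ketbra kl\mM(\gamma)^{-1/2})\gamma^{1/2}$, move the sandwiching operators through the sums using linearity, and recognize $\sum_{k,l}\mM(\gamma)^{-1/2}\ketbra kl\mM(\gamma)^{-1/2}\otimes\ketbra kl$ reassembled against $C_{\mM^\dag}$, finally transposing to land on $C_{\mM}$. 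I expect the sign/convention matching in this general version to be the fiddliest part; the measurement-channel computation above is essentially mechanical once the conventions are pinned down.
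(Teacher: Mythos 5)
Your measurement-channel computation is correct and complete: $C_{\mM}=\sum_y\ketbra{y}\otimes\Pi_y^T$, $C_{\mMR^\gamma}=\sum_y\ketbra{y}\otimes\Tr[\Pi_y\gamma]^{-1}\sqrt{\gamma}\,\Pi_y\sqrt{\gamma}$, and the term-by-term match after conjugation by $\mM(\gamma)^{-1/2}\otimes\sqrt{\gamma^T}$ all check out, including the bookkeeping on the $B$-first ordering and the basis-dependence of the transpose. This is a genuinely different, more concrete route than the paper's: the paper never specializes to the measurement channel but instead takes an arbitrary Kraus representation $\mM(\rho)=\sum_k K_k\rho K_k^\dag$ and applies the identity $\sum_{i,j}A\ket{i}\bra{j}B\otimes\ket{i}\bra{j}=\sum_{i,j}\ket{i}\bra{j}\otimes A^T\ket{i}\bra{j}B^T$ twice to show that the right-hand side of \eqref{eq:relation_Chois} is literally the transpose of the expansion of $C_{\mMR^\gamma}$. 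Your version is more transparent for the only channel the paper ever applies the lemma to; the paper's version proves the statement at the generality at which it is asserted.

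That last point is the one caveat. The lemma is stated for ``a quantum channel $\mM$,'' and your fully worked argument covers only the measurement channel; the general case is left as a sketch (``reassemble against $C_{\mM^\dag}$, finally transposing''), which you yourself flag as the fiddly part. The sketch is on the right track --- it is essentially the paper's argument with $\mM^\dag$ kept abstract instead of expanded in Kraus form --- but as written it is not a proof of the lemma as stated. To close it, either restrict the lemma's statement to measurement channels (sufficient for everything in the paper), or carry out the general computation: insert a Kraus decomposition, push the sandwiching operators $\gamma^{1/2}$ and $\mM(\gamma)^{-1/2}$ through the sums, and apply the transpose-swap identity once on each tensor factor, exactly as in \cref{app:proof_choi}.
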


We now want to construct two objects, $Q_F$ and $Q_R^\gamma$, which, analogously to the joint distributions $P_F$ and $P^\gamma_R$, are able to capture \textit{both} the input \textit{and} output of the forward and reverse processes. Specifically, the marginals of the operator $Q_F$ should recover the input state $\rho$ and the output $\mM(\rho)$ respectively, and analogously for $Q_R^\gamma$.

One choice is to define
\ba \label{eq:Q_F}
    Q_F := (\openone_B \otimes \sqrt{\rho^T}) C_{\mM} (\openone_B \otimes \sqrt{\rho^T})\,.
\ea
Such an operator is indeed able to capture the input and output of the forward process, in the sense that:
\ba \label{eq:TrQFQFdag}
    \Tr_A[Q_F] = \mM(\rho), \quad \Tr_B[Q_F] = \rho^T\,.
\ea

We define the representation for the reverse process $\mMR^\gamma$ similarly as
\begin{align} \label{eq:Q_R}
    Q_R^{\gamma}
    &:= \left(\sqrt{\tau}\otimes \openone_A \right) C_{\mMR^\gamma}^T  \left(\sqrt{\tau}\otimes \openone_A \right) \,,
\end{align}
where $\tau = \mM(\rho)$ is the input of the reverse process.
We use the transpose of the Choi operator of the reverse process so that it can be linked to $C_{\mM}$ by Lemma \ref{lem:choi} in the following way:
\begin{align*}
    &Q_R^{\gamma}=\\
    &\left(\sqrt{\tau}\sqrt{\mM{(\gamma)}}^{-1} \otimes \sqrt{\gamma^T}\right) C_{\mM} \left(\sqrt{\mM{(\gamma)}}^{-1} \sqrt{\tau}\otimes \sqrt{\gamma^T}\right) \,.
\end{align*}
This operator captures the input and output of the reverse process:
\ba \label{eq:TrQRQRdag}
    \Tr_A[Q_R^\gamma] = \tau, \quad \Tr_B[Q_R^\gamma] = [\mMR^\gamma(\tau)]^T\,.
\ea
The operators $Q_F,Q_R^\gamma$ just defined are analogous to  the state over time proposed by Leifer and Spekkens \cite{leifer2007conditional,leifer2013towards} up to a partial transpose.

Other definitions of input-output operators may satisfy nice properties. An alternative choice is, for example,
\begin{align}\label{eq:t-QF}
    {}^tQ_F := \sqrt{C_\mM}(\openone_B\otimes\rho^T)\sqrt{C_\mM}
\end{align}
and
\begin{align}\label{eq:t-QR}
    &{}^tQ_R^\gamma :=\\
    &\sqrt{C_\mM}(\mM(\gamma)^{-1/2}\tau\mM(\gamma)^{-1/2}\otimes\gamma^T)\sqrt{C_\mM}\;. \nonumber
\end{align}
The superscript ${}^t\bullet$ in~\eqref{eq:t-QF} and~\eqref{eq:t-QR} (not to be confused with $\bullet^T$) is used because the operators ${}^tQ_F$ and ${}^tQ_R^\gamma$ are, in a loose sense, a ``transposition'' of $Q_F$ and $Q_F^\gamma$, respectively. If $\Pi_y,\rho,\gamma$ do not commute, in general ${}^tQ_F\neq Q_F$ and ${}^tQ_R^\gamma\neq Q_R^\gamma$. For example, $\Tr_B[{}^tQ_F] = (\sum_y \sqrt{\Pi_y}\rho\sqrt{\Pi_y})^T$ which in general differs from $\rho^T$. Yet, they are similar, in the sense that $Q_F$ and ${}^tQ_F$ (resp. $Q_R^\gamma$ and ${}^tQ_R^\gamma$) share the same eigenvalues, and are thus unitarily equivalent, as it happens when doing a proper transposition.
Therefore, ${}^tQ_F$ and ${}^tQ_R^\gamma$ can be viewed as legitimate representations (up to unitaries) of the forward and reverse processes, and they will be useful in the irretrodictability interpretation of OE.

\subsection{Candidates for generalized OE}

Eqs.~\eqref{eq:sigma_rho_gamma_initial} and~\eqref{eq:Sj} provide two forms of the observational entropy: Eq.~\eqref{eq:sigma_rho_gamma_initial}, arising from the statistical deficiency approach, is the difference between relative entropies evaluated on the input system and the output system; Eq.~\eqref{eq:Sj}, arising from the irretrodictability approach, is the relative entropy between the forward and reverse processes. In the remainder of this section, we will propose generalizations of OE that take either or both of these forms.

\subsubsection{Candidate \#\numU: difference between input/output Umegaki entropies}

A first fully quantum generalisation of OE may just be obtained by replacing the reference state $u$ in Eq.~\eqref{SigmaBSS} with a general reference state $\gamma$:
\begin{align}
    &S^{(1)}_{\mathsf{M},\gamma}(\rho) \coloneqq\, S(\rho)+\Sigma^{(1)}_{\mathsf{M},\gamma}(\rho)  \label{SM1} \\&~\textrm{with}\;\Sigma^{(1)}_{\mathsf{M},\gamma}(\rho) = D(\rho\|\gamma)-D(\mathcal{M}(\rho)\|\mathcal{M}(\gamma))\nonumber\,,
\end{align}
that is, $S^{(1)}_{\mathsf{M},\gamma}(\rho) = -\Tr[\rho\ \ln\gamma]-D(\mathcal{M}(\rho)\|\mathcal{M}(\gamma))$, cf. Eq.~\eqref{eq:clax-gen-OE}, though this time it may be that $[\rho,\gamma]\neq 0$.
This definition has the form of Eq.~\eqref{eq:sigma_rho_gamma_initial} with $\mathfrak{D}$ taken to be the Umegaki relative entropy \eqref{eq:D_Umegaki}.
Notice that, while $D(\rho\|\gamma)$ is a fully quantum relative entropy, $D(\mathcal{M}(\rho)\|\mathcal{M}(\gamma))$ is in fact classical, since all the outputs of the channel $\mathcal{M}$ are diagonal in the same basis. %

\subsubsection{Candidate \#\numDQ: Umegaki relative entropy between forward/reverse processes}

Another option is to define a OE through Eq.~(\ref{eq:Sj}), thus choosing to prioritize irretrodictability. For this, one needs to choose a relative entropy and representations of the forward and reverse processes. Using the Umegaki relative entropy and the representations defined in \eqref{eq:Q_F} and \eqref{eq:Q_R}, we get
\begin{align}\label{eq:Sigma_DQ}
    \SDQ(\rho)\coloneqq\,&S(\rho)+\SigmaDQ(\rho)\\ &\textrm{with}\;\SigmaDQ(\rho) = D(Q_F\|Q_R^\gamma) .\nonumber
\end{align}
However, we will show in the following sections that this candidate lacks some of the properties we desire: we introduced it mainly for comparison with other candidates.

\subsubsection{Candidate \#\numBS: Belavkin--Staszewski relative entropy}

Besides Umegaki relative entropy, there are other choices for the quantum relative entropy between the representations of the forward and reverse processes, and between the states $\rho$ and $\gamma$. One such choice is the  Belavkin--Staszewski relative entropy \cite{belavkin1982c}, defined as
\ba\DBS(\rho\|\sigma)\coloneqq\Tr[\rho\ \ln\rho\sigma^{-1}] . \label{eq:DBS}\ea
The Belavkin--Staszewski relative entropy coincides with the Umegaki relative entropy and the classical relative entropy when $\rho$ and $\sigma$ commute, otherwise in general it is never smaller than Umegaki's. For a summary of the main properties of Belavkin--Staszewski relative entropy, and its relations with other quantum relative entropies, we refer the interested reader to Ref.~\cite{khatri2020principles}.

Inserting $\DBS$ into Eq.~\eqref{eq:sigma_rho_gamma_initial}, we obtain
\begin{align}
    &\SBS(\rho) \coloneqq S(\rho)+\SigmaBS(\rho)\label{SM2}\\ &~\textrm{with}\;\SigmaBS(\rho) = \DBS(\rho\|\gamma)-D(\mathcal{M}(\rho)\|\mathcal{M}(\gamma)) \nonumber\,.
\end{align}
Remarkably, it turns out that the above definition recovers the form of Eq.~\eqref{eq:Sj}. Assuming that $Q_R^\gamma$, and thus ${}^tQ_R^\gamma$, is full-rank, one has
\begin{align} \label{eq:DBS-D=DBS}
    \DBS(\rho \| \gamma)  - \DBS(\mathcal{M}(\rho) \| \mathcal{M}(\gamma)) = \DBS \left({}^tQ_F \middle\| {}^tQ_R^\gamma\right),
\end{align}
where $\DBS(\mathcal{M}(\rho) \| \mathcal{M}(\gamma))=\D(\mathcal{M}(\rho) \| \mathcal{M}(\gamma))$ since those states commute, and where ${}^tQ_F$ and ${}^tQ_R^\gamma$ were defined in~\eqref{eq:t-QF} and~\eqref{eq:t-QR}. The proof of the identity~\eqref{eq:DBS-D=DBS} is given in~\cref{app:proof_DDD}. Thus, $\SigmaBS$ indeed admits both the statistical deficiency and the irretrodictability interpretations.

\section{Properties}\label{sec:properties}

\begin{table*}
    \centering
    \scalebox{0.94}{
    \small
    \begin{tabular}{c|c|c|c|c|c}
    \hline
         \begin{minipage}{1.5cm}
         \centering
            Definition
         \end{minipage} & \begin{minipage}{4.2cm}
         \centering
            Deficiency interpretation
         \end{minipage} & \begin{minipage}{2.7cm}
         \centering
            Irretrodictability interpretation
         \end{minipage} & \begin{minipage}{2.5cm}
         \centering
            Equal to $\Scl(\rho)$ when
         \end{minipage} & \begin{minipage}{1.5cm}
         \centering
            Petz recovery criterion %
         \end{minipage} & \begin{minipage}{2.5cm}
         \centering
            \vspace{0.2em}
            Non-decreasing under stochastic post-processing
            \vspace{0.2em}
         \end{minipage}\\
         \hline
         $\SUmegaki$ \eqref{SM1} & $D(\rho\|\gamma) - D(\mM(\rho)\|\mM(\gamma))$ & N/A & $[\rho,\gamma]=0$
         & Yes & Yes\\
         \hline
         $\SDQ$ \eqref{eq:Sigma_DQ} & N/A & $D(Q_F\|Q_R^\gamma)$ & $\rho,\gamma,\Pi_y$ commute
         & Yes
           & No\\
         \hline
         $\SBS$ \eqref{SM2} & $\DBS(\rho\|\gamma) - D(\mM(\rho)\|\mM(\gamma))$ & $\DBS({}^tQ_F\|{}^tQ_R^\gamma)$ & $[\rho,\gamma]=0$
         & Yes & Yes\\
         \hline
    \end{tabular}
    }
    \caption{\label{tab:prop}Properties of $S^{(j)}_{\mathsf{M},\gamma}$. %
    The expressions for the statistical deficiency and irretrodictability interpretations do not match if one uses the Umegaki relative entropy. On the other hand, the use of the Belavkin-Staszewski relative entropy gives an expression that unifies both interpretations.}
\end{table*}

We proceed now to discuss the properties of the three candidates $S^{(j)}_{\mathsf{M},\gamma}$ ($j=1,2,3$) defined above, with a comparison between them summarized in \cref{tab:prop}. The main properties to consider for any candidate generalized OE are the following:
\begin{enumerate}[label={(\roman*)}]
    \item \label{item:first_essential_property} \label{item:property_u} When the reference prior is the uniform distribution (maximally mixed state), the candidate should recover the original definition \eqref{eq:original}. This is true for $S^{(1)}_{\mathsf{M},\gamma}$ and $S^{(3)}_{\mathsf{M},\gamma}$, namely when $\gamma=u:=\openone/d$,
    \begin{align}
        S_{\mathsf{M},u}^{(1,3)}(\rho) = S_{\mathsf{M}}(\rho) \,.
    \end{align}
    Instead, in order to recover the conventional OE, $\SDQ$ further requires that $[\rho,\Pi_y]=0$ for all $y$.
    
    \item \label{item:property_clax} More generally, when $[\rho,\gamma]=0$, one has \begin{align}
        S_{\mathsf{M},\gamma}^{(1,3)}(\rho)=\Scl(\rho) \,.
    \end{align}
    Instead, the condition $S_{\mathsf{M},\gamma}^{(2)}(\rho)=\Scl(\rho)$ in general requires $[\rho,\gamma]=[\rho,\Pi_y]=[\gamma,\Pi_y]=0$ for all $y$.
    
    \item \label{item:property_lower_bound} Like the original OE, all of them are lower-bounded by the von Neumann entropy:
    \begin{align}
        S^{(j)}_{\mathsf{M},\gamma}(\rho)\geq S(\rho)\,.
    \end{align} %
    Thus, the OEs retain the desirable property that one cannot have less uncertainty than the von Neumann entropy. %
    \label{item:last_essential_property}
\end{enumerate}
The proofs of the above properties are in \cref{app:essential_properties}.

Other non-essential, yet desirable properties include:

\begin{enumerate}[resume,label={(\roman*)}]
    \item \label{item:property_diff}
    $S_{\mathsf{M},\gamma}^{(j)}(\rho)$ admits both interpretations, as statistical deficiency, i.e. Eq.~\eqref{eq:sigma_rho_gamma_initial}, and irretrodictability, i.e., Eq.~\eqref{eq:Sj}. This property is satisfied by $\SBS$, with suitable definitions of the input-output descriptions.

    \item \label{item:property_Petz} $S^{(j)}_{\mathsf{M},\gamma}(\rho)$ satisfies the Petz recovery criterion: $S^{(j)}_{\mathsf{M},\gamma}(\rho)=S(\rho)$ if and only if $\mMR^\gamma \big(\mM(\rho)\big)=\rho$, where $\mMR^\gamma$ is the Petz map of $\mM$ with reference $\gamma$ defined in \eqref{eq:reverse_process}. This property is satisfied by all candidates, as shown in \cref{app:Petz}. %

    \item \label{item:property_monotonicity} $S^{(j)}_{\mathsf{M},\gamma}(\rho)$ is non-decreasing under stochastic post-processing. We say $\mathsf{M}'=\{\Pi'_z\}$ is a post-processing of $\mathsf{M}$ if its outcome can be obtained by applying a stochastic map on the outcome of $\mathsf{M}$, namely there exists a stochastic matrix $w$ with $\sum_z w_{zy}=1$ for all $y$ satisfying
    \begin{align}
        \Pi'_z = \sum_i w_{zy} \Pi_y, ~\forall y \,.
    \end{align}
    This property for $S^{(j)}_{\mathsf{M},\gamma}(\rho)$ says that, for any $\mathsf{M}'$ that is a post-processing of $\mathsf{M}$, one has
    \begin{align}
        S^{(j)}_{\mathsf{M}',\gamma}(\rho) \geq S^{(j)}_{\mathsf{M},\gamma}(\rho)\,.
    \end{align}
    This property is satisfied by $j=\numU,\numBS$, with proofs in \cref{app:stochastic}.
    
\end{enumerate}

Finally, we notice that while the original OE, Eq.~\eqref{eq:original}, is upper bounded as $S_{\mathsf{M}}(\rho) \leq \ln d$, in general, for a non-uniform reference $\gamma$, the same bound does not hold, as expected. However 
\begin{align}
    \SUmegaki(\rho) &\leq \SBS(\rho)\label{eq:Ad_between_U_BS}
\end{align}
holds because the Belavkin-Staszewski relative entropy bounds the Umegaki one from above \cite{matsumoto2015new,hiai2017different}. Also
\begin{align}
    \SUmegaki(\rho) &\leq \SDQ(\rho)\label{eq:ineq_12} 
\end{align}
holds due to joint convexity of the relative entropy (proof in \cref{app:proof_ineq_12}).

\section{Examples}\label{sec:examples}

\subsection{Gibbs prior}

In the presence of a Hamiltonian $H = \sum_{n=0}^{d-1} E_n\ketbra{n}$, a very natural choice of non-uniform prior is the Gibbs state
\begin{align}
    \gamma := e^{-\beta H}/\Tr[e^{-\beta H}], ~ \beta>0.
\end{align}

We also consider the measurement in the energy eigenbasis $\mathsf{M}:= \{\ketbra{0},\dots,\ketbra{d-1}\}$, but to move far away from the classical case we assume that the input state is pure and maximally unbiased with the energy eigenbasis:
\begin{align}
    \rho := \ketbra{\psi}&\textrm{ with } \ket{\psi} := \frac{1}{\sqrt{d}}\sum_{n=0}^{d-1} \ket{n}.
\end{align}
With these assumptions, the first definition yields
\begin{align}
    \SUmegaki(\rho) &= S_{\mathsf{M}}(\rho) = \ln d\,,
\end{align} which is also the case if $\rho$ is a mixture of maximally unbiased states. Like $\Scl$, $\SUmegaki$ reduces to the original $S_{\mathsf{M}}$ when the prior is a convex sum of the measurement elements.

The second definition yields
\begin{align}
    \SDQ(\rho) &= \infty\,,
\end{align} for any pure state, since the support of $Q_R^\gamma=\frac1d\sum_n\ketbra{n}\otimes\ketbra{n}$ does not contain the support of $Q_F=\frac1d \openone\otimes\ketbra{\psi}$. We shall comment on this result after the next example. 

Finally, the third definition yields
\begin{align}
    \SBS(\rho) &= \ln\Tr[\gamma^{-1}] + \frac{1}{d}\Tr[\ln\gamma] \\
    &= \ln\frac{1-e^{\beta\omega d}}{1-e^{\beta\omega}} - \frac{\beta\omega(d-1)}{2}\,.
\end{align}
where the first line is general, while the second is the expression for equidistant spectrum $E_n=n\omega$. Thus $\SBS$ is more sensitive than $\SUmegaki$ to quantum situations.

\subsection{Three-qubit encoding}

The following example is inspired by a simple error-correcting code, the three-qubit encoding of a pure qubit:
\begin{align}
    \alpha \ket0 + \beta \ket1 \mapsto \alpha \ket{000} + \beta \ket{111}, ~ |\alpha|^2+|\beta|^2 =1 \,.
\end{align}
Suppose $\rho$ is the encoded state
\begin{align}
    \rho := (\alpha \ket{000} + \beta \ket{111})(\alpha^* \bra{000} + \beta^* \bra{111}) \,.
\end{align}
We consider the measurement of each qubit in the $\{\ket{+},\ket{-}\}$ basis, i.e.~the POVM elements are projectors on the basis vectors
\begin{align}
    \{\ket{+++},\ket{++-},\ket{+-+},\dots,\ket{---}|\}\,.
\end{align}
As for the prior, we suppose that the observer knows the encoding of the error correction code, and expect $\rho$ to be more probably in the subspace spanned by $\ket{000}$ and $\ket{111}$, possibly with a bias towards one of those product states; whence
\begin{align}
    \gamma &:= p_0\ketbra{000} + p_1\ketbra{111} \nonumber \\
    &~+ \frac{1-p_0-p_1}{6}(\openone - \ketbra{000} - \ketbra{111}) \,, \\
    & p_0,p_1>0,~ p_0+p_1<1\,. \nonumber
\end{align}
In this case, the three definitions proposed here yield
\begin{align}
    \SUmegaki(\rho) &= |\alpha|^2\ln\frac{1}{p_0} + |\beta|^2\ln\frac{1}{p_1} - D(\mM(\rho)\|\mM(\gamma))\,,\\
    \SDQ(\rho) &= \infty\,,\\
    \SBS(\rho) &= \ln\left(\frac{|\alpha|^2}{p_0} + \frac{|\beta|^2}{p_1}\right) - D(\mM(\rho)\|\mM(\gamma))
\end{align} with
\begin{align}
    &D(\mM(\rho)\|\mM(\gamma))\nonumber\\&= |\alpha + \beta|^2\ln|\alpha+\beta|+|\alpha - \beta|^2\ln|\alpha-\beta|\,.
\end{align}
$\SUmegaki$ and $\SBS$ differ in the first term, as long as $p_0\neq p_1$: for $p_0=p_1= p$, both yield $\ln(1/p)$. In particular, when $p=\frac{1}{8}$, $\gamma=u$ and therefore $\SUmegaki(\rho)=\SBS(\rho)=S_{\mathsf{M}}(\rho)$.  
We see that $\SDQ$ is still infinite, for the same reason of support mismatch as in the previous example. From the examples, we observe that $\SDQ$ is often overly sensitive to the non-commutativity between $\rho$ and $\gamma$. This suggests that, instead of the natural choice of $Q_F$ and $Q_R^\gamma$ as input-output representations, one could opt for representations whose supports are more aligned, such as ${}^tQ_F$ and ${}^tQ_R^\gamma$, which relate to $\SBS$ via Eq.~\eqref{eq:DBS-D=DBS}.

\section{Conclusions}

The original definition [Eq.~\eqref{eq:original}] of observational entropy (OE) was known to be lower-bounded by the von Neumann entropy. Here we have first brought to the fore that the excess term $\Sigma_{\mathsf{M}}(\rho)$ can be interpreted in two ways: as a statistical deficiency \eqref{SigmaBSS}, quantifying the decrease of state distinguishability induced by the measurement; and as irretrodictability \eqref{eq:classical_D=S},  quantifying the hardness of retrodicting the input from the output statistics. While it is intuitive that recovering the input state is harder if the measurement makes states less distinguishable, the exact coincidence of the quantifiers is of interest.

In both interpretations, we observe that the uniform state $u$ plays the role of reference, or prior, knowledge. This may not represent the proper knowledge of the physical situation: for instance, for systems in contact with a thermal bath, it may be more natural to choose the Gibbs prior. Based on this, we have studied generalisations of OE, in which the prior knowledge can be an arbitrary state $\gamma$.

When $[\rho,\gamma]=0$, we find an obvious generalisation of the excess term [Eq.~\eqref{eq:D-D=DP}] that retains both interpretations of statistical deficiency and irretrodictability. This is no longer straightforward for a general quantum prior. Technically, one of the main difficulty lies in that the irretrodictability quantifier is a relative entropy between joint input-and-output objects, whose definition in quantum theory is a current topic of research. We have explored three possible definitions of generalized OE (Table~\ref{tab:prop}): two specifically designed to satisfy one of the interpretations but lacking the other; the third retaining both by replacing the usual Umegaki relative entropy with the Belavkin-Staszewski version. Thus we have a novel fully quantum object, that quantifies simultaneously the loss of distinguishability by the measurement and the hardness to retrodict the input knowing the output. Being built from information-theoretical considerations,  our new formulation of OE may also hold significance in physical (thermodynamical) contexts, such as its relationship with work extraction \cite{safranek2023work}. We leave a deeper exploration of the physical implications of OE for future research.

\medskip

\section*{Acknowledgments}

We thank Clive Aw, Fumio Hiai, Anna Jen\v{c}ov\'a and Andreas Winter for discussions.

G.B.~and V.S.~are supported by the National Research Foundation, Singapore and A*STAR under its CQT Bridging Grant; and by the Ministry of Education, Singapore, under the Tier 2 grant ``Bayesian approach to irreversibility'' (Grant No.~MOE-T2EP50123-0002).  D.\v{S}.~acknowledges the support from the Institute for Basic Science in Korea (IBS-R024-D1). F.B.~acknowledges support from MEXT Quantum Leap Flagship Program (MEXT QLEAP) Grant No.~JPMXS0120319794, from MEXT-JSPS Grant-in-Aid for Transformative Research Areas (A) ``Extreme Universe'' No.~21H05183, and from JSPS KAKENHI, Grants No.~20K03746 and No.~23K03230. J.S.~acknowledges support by MICIIN with funding from European Union NextGenerationEU (PRTR-C17.I1) and by Generalitat de Catalunya.

\bibliographystyle{quantum}
\bibliography{refsretro}

\begin{thebibliography}{10}

\bibitem{von1955mathematical}
John von Neumann.
\newblock ``Mathematical foundations of quantum mechanics''.
\newblock
  \href{https://dx.doi.org/10.1515/9781400889921}{\href{http://press.princeton.edu/titles/2113.html}{Princeton
  university press}}. ~(1955).

\bibitem{vonNeumann1929translation}
John von Neumann.
\newblock ``{Proof of the ergodic theorem and the H-theorem in quantum
  mechanics. Translation of: Beweis des Ergodensatzes und des H-Theorems in der
  neuen Mechanik}''.
\newblock \href{https://dx.doi.org/10.1140/epjh/e2010-00008-5}{European
  Physical Journal H {\bf 35}, 201--237}~(2010).

\bibitem{safranek2019a}
Dominik {{\v{S}}afr{\'a}nek}, J.~M. {Deutsch}, and Anthony {Aguirre}.
\newblock ``{Quantum coarse-grained entropy and thermodynamics}''.
\newblock \href{https://dx.doi.org/10.1103/PhysRevA.99.010101}{Phys. Rev. A
  {\bf 99}, 010101}~(2019).
\newblock  \href{http://arxiv.org/abs/1707.09722}{arXiv:1707.09722}.

\bibitem{safranek2019b}
Dominik {{\v{S}}afr{\'a}nek}, J.~M. {Deutsch}, and Anthony {Aguirre}.
\newblock ``{Quantum coarse-grained entropy and thermalization in closed
  systems}''.
\newblock \href{https://dx.doi.org/10.1103/PhysRevA.99.012103}{Phys. Rev. A
  {\bf 99}, 012103}~(2019).
\newblock  \href{http://arxiv.org/abs/1803.00665}{arXiv:1803.00665}.

\bibitem{safranek2021brief}
Dominik {{\v{S}}afr{\'a}nek}, Anthony {Aguirre}, Joseph {Schindler}, and J.~M.
  {Deutsch}.
\newblock ``{A Brief Introduction to Observational Entropy}''.
\newblock \href{https://dx.doi.org/10.1007/s10701-021-00498-x}{Foundations of
  Physics {\bf 51}, 101}~(2021).
\newblock  \href{http://arxiv.org/abs/2008.04409}{arXiv:2008.04409}.

\bibitem{SW21}
Philipp Strasberg and Andreas Winter.
\newblock ``First and second law of quantum thermodynamics: A consistent
  derivation based on a microscopic definition of entropy''.
\newblock \href{https://dx.doi.org/10.1103/PRXQuantum.2.030202}{PRX Quantum
  {\bf 2}, 030202}~(2021).

\bibitem{safranek2021generalized}
Dominik {\v{S}}afr{\'a}nek and Juzar Thingna.
\newblock ``Quantifying information extraction using generalized quantum
  measurements''.
\newblock \href{https://dx.doi.org/10.1103/PhysRevA.108.032413}{Physical Review
  A {\bf 108}, 032413}~(2023).
\newblock  \href{http://arxiv.org/abs/2007.07246}{arXiv:2007.07246}.

\bibitem{buscemi2022observational}
Francesco Buscemi, Joseph Schindler, and Dominik {\v{S}}afr{\'{a}}nek.
\newblock ``Observational entropy, coarse-grained states, and the petz recovery
  map: information-theoretic properties and bounds''.
\newblock \href{https://dx.doi.org/10.1088/1367-2630/accd11}{New Journal of
  Physics {\bf 25}, 053002}~(2023).

\bibitem{riera2020finite}
Andreu Riera-Campeny, Anna Sanpera, and Philipp Strasberg.
\newblock ``Quantum systems correlated with a finite bath: Nonequilibrium
  dynamics and thermodynamics''.
\newblock \href{https://dx.doi.org/10.1103/PRXQuantum.2.010340}{PRX Quantum
  {\bf 2}, 010340}~(2021).
\newblock  \href{http://arxiv.org/abs/2008.02184}{arXiv:2008.02184}.

\bibitem{safranek2020classical}
Dominik {{\v{S}}afr{\'a}nek}, Anthony {Aguirre}, and J.~M. {Deutsch}.
\newblock ``{Classical dynamical coarse-grained entropy and comparison with the
  quantum version}''.
\newblock \href{https://dx.doi.org/10.1103/PhysRevE.102.032106}{Phys. Rev. E
  {\bf 102}, 032106}~(2020).
\newblock  \href{http://arxiv.org/abs/1905.03841}{arXiv:1905.03841}.

\bibitem{deutsch2020probabilistic}
Joshua~M. {Deutsch}, Dominik {{\v{S}}afr{\'a}nek}, and Anthony {Aguirre}.
\newblock ``{Probabilistic bound on extreme fluctuations in isolated quantum
  systems}''.
\newblock \href{https://dx.doi.org/10.1103/PhysRevE.101.032112}{\pre {\bf 101},
  032112}~(2020).
\newblock  \href{http://arxiv.org/abs/1806.08897}{arXiv:1806.08897}.

\bibitem{faiez2020typical}
Dana {Faiez}, Dominik {{\v{S}}afr{\'a}nek}, J.~M. {Deutsch}, and Anthony
  {Aguirre}.
\newblock ``{Typical and extreme entropies of long-lived isolated quantum
  systems}''.
\newblock \href{https://dx.doi.org/10.1103/PhysRevA.101.052101}{\pra {\bf 101},
  052101}~(2020).
\newblock  \href{http://arxiv.org/abs/1908.07083}{arXiv:1908.07083}.

\bibitem{nation2020snapshots}
Charlie {Nation} and Diego {Porras}.
\newblock ``{Taking snapshots of a quantum thermalization process: Emergent
  classicality in quantum jump trajectories}''.
\newblock \href{https://dx.doi.org/10.1103/PhysRevE.102.042115}{\pre {\bf 102},
  042115}~(2020).
\newblock  \href{http://arxiv.org/abs/2003.08425}{arXiv:2003.08425}.

\bibitem{strasberg2021clausius}
Philipp {Strasberg}, Mar{\'\i}a~Garc{\'\i}a {D{\'\i}az}, and Andreu
  {Riera-Campeny}.
\newblock ``{Clausius inequality for finite baths reveals universal efficiency
  improvements}''.
\newblock \href{https://dx.doi.org/10.1103/PhysRevE.104.L022103}{Phys. Rev. E
  {\bf 104}, L022103}~(2021).
\newblock  \href{http://arxiv.org/abs/2012.03262}{arXiv:2012.03262}.

\bibitem{hamazaki2022speed}
Ryusuke {Hamazaki}.
\newblock ``{Speed Limits for Macroscopic Transitions}''.
\newblock \href{https://dx.doi.org/10.1103/PRXQuantum.3.020319}{PRX Quantum
  {\bf 3}, 020319}~(2022).
\newblock  \href{http://arxiv.org/abs/2110.09716}{arXiv:2110.09716}.

\bibitem{modak2022observational}
Ranjan Modak and S.~Aravinda.
\newblock ``Observational-entropic study of anderson localization''.
\newblock \href{https://dx.doi.org/10.1103/PhysRevA.106.062217}{Phys. Rev. A
  {\bf 106}, 062217}~(2022).

\bibitem{sreeram2023witnessing}
Sreeram PG, Ranjan Modak, and S.~Aravinda.
\newblock ``Witnessing quantum chaos using observational entropy''.
\newblock \href{https://dx.doi.org/10.1103/PhysRevE.107.064204}{Phys. Rev. E
  {\bf 107}, 064204}~(2023).

\bibitem{schindler2023continuity}
Joseph {Schindler} and Andreas {Winter}.
\newblock ``{Continuity bounds on observational entropy and measured relative
  entropies}''.
\newblock \href{https://dx.doi.org/10.1063/5.0147294}{Journal of Mathematical
  Physics{\bf 64}}~(2023).
\newblock  \href{http://arxiv.org/abs/2302.00400}{arXiv:2302.00400}.

\bibitem{safranek2023work}
Dominik {{\v{S}}afr{\'{a}}nek}, Dario Rosa, and Felix~C. Binder.
\newblock ``Work extraction from unknown quantum sources''.
\newblock \href{https://dx.doi.org/10.1103/PhysRevLett.130.210401}{Phys. Rev.
  Lett. {\bf 130}, 210401}~(2023).

\bibitem{safranek2023measuring}
Dominik {{\v{S}}afr{\'{a}}nek} and Dario {Rosa}.
\newblock ``Measuring energy by measuring any other observable''.
\newblock \href{https://dx.doi.org/10.1103/PhysRevA.108.022208}{Phys. Rev. A
  {\bf 108}, 022208}~(2023).

\bibitem{safranek2023ergotropic}
Dominik {{\v{S}}afr{\'a}nek}.
\newblock ``{Ergotropic interpretation of entanglement entropy}''~(2023).
\newblock  \href{http://arxiv.org/abs/2306.08987}{arXiv:2306.08987}.

\bibitem{umegaki}
Hisaharu Umegaki.
\newblock ``On information in operator algebras''.
\newblock \href{https://dx.doi.org/10.3792/pja/1195523632}{Proc. Japan Acad.
  {\bf 37}, 459--461}~(1961).

\bibitem{umegaki1962conditional}
Hisaharu Umegaki.
\newblock ``Conditional expectation in an operator algebra. iv. entropy and
  information''.
\newblock \href{https://dx.doi.org/10.2996/kmj/1138844604}{Kodai Mathematical
  Journal{\bf 14}}~(1962).

\bibitem{kullback1951}
S.~Kullback and R.~A. Leibler.
\newblock ``On information and sufficiency''.
\newblock \href{https://dx.doi.org/10.1214/aoms/1177729694}{Ann. Math. Statist.
  {\bf 22}, 79--86}~(1951).

\bibitem{crooks-theorem}
Gavin~E. Crooks.
\newblock ``{Nonequilibrium Measurements of Free Energy Differences for
  Microscopically Reversible Markovian Systems}''.
\newblock \href{https://dx.doi.org/10.1023/A:1023208217925}{Journal of
  Statistical Physics {\bf 90}, 1481--1487}~(1998).

\bibitem{esposito-RMP-review}
Massimiliano Esposito, Upendra Harbola, and Shaul Mukamel.
\newblock ``Nonequilibrium fluctuations, fluctuation theorems, and counting
  statistics in quantum systems''.
\newblock \href{https://dx.doi.org/10.1103/RevModPhys.81.1665}{Rev. Mod. Phys.
  {\bf 81}, 1665--1702}~(2009).

\bibitem{landi-RMP-review}
Gabriel~T. Landi and Mauro Paternostro.
\newblock ``Irreversible entropy production: From classical to quantum''.
\newblock \href{https://dx.doi.org/10.1103/RevModPhys.93.035008}{Rev. Mod.
  Phys. {\bf 93}, 035008}~(2021).

\bibitem{BS21}
Francesco Buscemi and Valerio Scarani.
\newblock ``Fluctuation theorems from bayesian retrodiction''.
\newblock \href{https://dx.doi.org/10.1103/PhysRevE.103.052111}{Phys. Rev. E
  {\bf 103}, 052111}~(2021).

\bibitem{AwBS}
Clive~Cenxin Aw, Francesco Buscemi, and Valerio Scarani.
\newblock ``Fluctuation theorems with retrodiction rather than reverse
  processes''.
\newblock \href{https://dx.doi.org/10.1116/5.0060893}{AVS Quantum Science {\bf
  3}, 045601}~(2021).
\newblock
  \href{http://arxiv.org/abs/https://doi.org/10.1116/5.0060893}{arXiv:https://doi.org/10.1116/5.0060893}.

\bibitem{jeffrey}
Richard~Carl Jeffrey.
\newblock ``The logic of decision''.
\newblock McGraw-Hill. ~(1965).

\bibitem{pearl}
Judea {Pearl}.
\newblock ``Probabilistic reasoning in intelligent systems: networks of
  plausible inference''.
\newblock \href{https://dx.doi.org/10.1016/C2009-0-27609-4}{Elsevier}. ~(1988).

\bibitem{CHAN200567}
Hei Chan and Adnan Darwiche.
\newblock ``On the revision of probabilistic beliefs using uncertain
  evidence''.
\newblock
  \href{https://dx.doi.org/https://doi.org/10.1016/j.artint.2004.09.005}{Artificial
  Intelligence {\bf 163}, 67 -- 90}~(2005).

\bibitem{watanabe55}
Satosi Watanabe.
\newblock ``Symmetry of physical laws. part iii. prediction and retrodiction''.
\newblock \href{https://dx.doi.org/10.1103/RevModPhys.27.179}{Rev. Mod. Phys.
  {\bf 27}, 179--186}~(1955).

\bibitem{watanabe65}
Satosi Watanabe.
\newblock ``Conditional probabilities in physics''.
\newblock \href{https://dx.doi.org/https://doi.org/10.1143/PTPS.E65.135}{Progr.
  Theor. Phys. Suppl. {\bf E65}, 135--160}~(1965).

\bibitem{Parzygnat2023axiomsretrodiction}
Arthur~J. Parzygnat and Francesco Buscemi.
\newblock ``Axioms for retrodiction: achieving time-reversal symmetry with a
  prior''.
\newblock \href{https://dx.doi.org/10.22331/q-2023-05-23-1013}{{Quantum} {\bf
  7}, 1013}~(2023).

\bibitem{petz1}
Denes Petz.
\newblock ``Sufficient subalgebras and the relative entropy of states of a von
  neumann algebra''.
\newblock \href{https://dx.doi.org/10.1007/BF01212345}{Comm. Math. Phys. {\bf
  105}, 123--131}~(1986).

\bibitem{petz}
Denes Petz.
\newblock ``{Sufficiency of channels over von Neumann algebras}''.
\newblock \href{https://dx.doi.org/10.1093/qmath/39.1.97}{The Quarterly Journal
  of Mathematics {\bf 39}, 97--108}~(1988).

\bibitem{choi1975completely}
Man-Duen Choi.
\newblock ``Completely positive linear maps on complex matrices''.
\newblock \href{https://dx.doi.org/10.1016/0024-3795(75)90075-0}{Linear Algebra
  and its Applications {\bf 10}, 285–290}~(1975).

\bibitem{leifer2007conditional}
M.~S. Leifer.
\newblock ``Conditional density operators and the subjectivity of quantum
  operations''.
\newblock In AIP Conference Proceedings.
\newblock \href{https://dx.doi.org/10.1063/1.2713456}{Volume 889, page
  172–186}.
\newblock AIP~(2007).

\bibitem{leifer2013towards}
M.~S. Leifer and Robert~W. Spekkens.
\newblock ``Towards a formulation of quantum theory as a causally neutral
  theory of bayesian inference''.
\newblock \href{https://dx.doi.org/10.1103/physreva.88.052130}{Physical Review
  A{\bf 88}}~(2013).

\bibitem{belavkin1982c}
Viacheslav~P. Belavkin and P.~Staszewski.
\newblock ``{$C^{*}$}-algebraic generalization of relative entropy and
  entropy''.
\newblock In Annales de l'IHP Physique th{\'e}orique.
\newblock Volume 37 no. 1, pages 51--58.
\newblock ~(1982).
\newblock  url:~\url{http://www.numdam.org/item/AIHPA_1982__37_1_51_0/}.

\bibitem{khatri2020principles}
Sumeet Khatri and Mark~M. Wilde.
\newblock ``Principles of quantum communication theory: A modern
  approach''~(2020).
\newblock  \href{http://arxiv.org/abs/2011.04672}{arXiv:2011.04672}.

\bibitem{matsumoto2015new}
Keiji Matsumoto.
\newblock ``A new quantum version of f-divergence''.
\newblock In Reality and Measurement in Algebraic Quantum Theory.
\newblock \href{https://dx.doi.org/10.1007/978-981-13-2487-1_10}{Page
  229–273}.
\newblock Springer Singapore~(2018).

\bibitem{hiai2017different}
Fumio Hiai and Milán Mosonyi.
\newblock ``Different quantum f-divergences and the reversibility of quantum
  operations''.
\newblock \href{https://dx.doi.org/10.1142/s0129055x17500234}{Reviews in
  Mathematical Physics {\bf 29}, 1750023}~(2017).

\bibitem{bluhm2023continuity}
Andreas Bluhm, Ángela Capel, Paul Gondolf, and Antonio Pérez-Hernández.
\newblock ``Continuity of quantum entropic quantities via almost convexity''.
\newblock \href{https://dx.doi.org/10.1109/tit.2023.3277892}{IEEE Transactions
  on Information Theory {\bf 69}, 5869–5901}~(2023).

\bibitem{hiai2011quantum}
Fumio Hiai, Mil{\'a}n Mosonyi, D{\'e}nes Petz, and C{\'e}dric B{\'e}ny.
\newblock ``Quantum f-divergences and error correction''.
\newblock \href{https://dx.doi.org/10.1142/s0129055x11004412}{Reviews in
  Mathematical Physics {\bf 23}, 691–747}~(2011).

\bibitem{bluhm2020strengthened}
Andreas Bluhm and {\'A}ngela Capel.
\newblock ``A strengthened data processing inequality for the
  belavkin--staszewski relative entropy''.
\newblock \href{https://dx.doi.org/10.1142/s0129055x20500051}{Reviews in
  Mathematical Physics {\bf 32}, 2050005}~(2019).

\end{thebibliography}

\newpage\onecolumn
\appendix

\section{Proof of Lemma \ref{lem:choi}} \label{app:proof_choi}
\cref{lem:choi} relates the Choi operators of the forward and reverse processes. This can be shown using their definitions \eqref{eq:direct-choi} and \eqref{eq:reverse-choi}.
\begin{proof}
    Let the Kraus representation of $\mM$ be $\mM(\rho) = \sum_k K_k \rho K_k^\dag$. 
    First, observe the following identity
    \ba
        \sum_{i,j} A\ket{i}\bra{j}B \otimes \ket{i}\bra{j} = \sum_{i,j} \ket{i}\bra{j} \otimes A^T\ket{i}\bra{j}B^T
    \ea
    for any operators $A$ and $B$. 
    Using this identity twice, the right-hand side of \cref{eq:relation_Chois} equals to
    \begin{align}
        &\phantom{=}~ \left(\mM(\gamma)^{-1/2}\otimes \sqrt{\gamma^{T}}\right) C_{\mM} \left(\mM(\gamma)^{-1/2}\otimes \sqrt{\gamma^{T}}\right) \nonumber \\
       & = \sum_{i,j} \mM(\gamma)^{-1/2}\mM(\ket{i}\bra{j})\mM(\gamma)^{-1/2} \otimes \sqrt{\gamma^{T}} \ket{i}\bra{j}\sqrt{\gamma^{T}} \nonumber\\
        & = \sum_{i,j,k} \mM(\gamma)^{-1/2}\ket{i}\bra{j}\mM(\gamma)^{-1/2} \otimes \sqrt{\gamma^{T}} K_k^T \ket{i}\bra{j} K_k^*\sqrt{\gamma^{T}} \nonumber\\
        & = \sum_{i,j,k} \ket{i}\bra{j} \otimes  \sqrt{\gamma^{T}} K_k^T (\mM(\gamma)^{-1/2})^T\ket{i}\bra{j}(\mM(\gamma)^{-1/2})^T K_k^*\sqrt{\gamma^{T}} \label{eq:CmM_expansion}
    \end{align}
    On the other hand, 
    \begin{align}
        C_{\mMR^\gamma} &= \sum_{i,j} \ket{i}\bra{j} \otimes \mMR^\gamma (\ket{i}\bra{j}) \nonumber\\
        &=\sum_{i,j,k} \ket{i}\bra{j} \otimes \sqrt{\gamma} K_k^\dag \mM(\gamma)^{-1/2}\ket{i}\bra{j}\mM(\gamma)^{-1/2} K_k \sqrt{\gamma}\nonumber \\
        &=\sum_{i,j,k} \ket{j}\bra{i} \otimes \sqrt{\gamma} K_k^\dag \mM(\gamma)^{-1/2}\ket{j}\bra{i}\mM(\gamma)^{-1/2} K_k \sqrt{\gamma} \label{eq:CmMR_expansion}
    \end{align}
    Notice that \cref{eq:CmMR_expansion} is the transpose of \cref{eq:CmM_expansion}. This proves \cref{eq:relation_Chois}.
\end{proof}

\section{Proof of \cref{eq:DBS-D=DBS}} \label{app:proof_DDD}

The most important observation for \cref{eq:DBS-D=DBS} is that, one of the $\sqrt{C_{\mM}}$ in the definitions of ${}^tQ_F$ (\ref{eq:t-QF}) and ${}^t Q_R$ (\ref{eq:t-QR}) will cancel each other in the expression of  $\DBS\left({}^tQ_F \middle\| {}^tQ_R^{\gamma}\right) $, leaving a tensor product inside the logarithm. That is to say,
    \begin{align}
        &\phantom{=}\ \ln {}^tQ_F ({}^tQ_R^{\gamma})^{-1} \nonumber\\
        & = \ln \sqrt{C_{\mathcal{M}}} (\openone \otimes \rho^T) \sqrt{C_{\mathcal{M}}} \left(\sqrt{C_{\mathcal{M}}} (\mathcal{M}(\gamma)^{-1/2} \tau \mathcal{M}(\gamma)^{-1/2} \otimes \gamma^T)\sqrt{C_{\mathcal{M}}}\right)^{-1} \nonumber\\
        & = \ln \sqrt{C_{\mathcal{M}}} ( \mathcal{M}(\gamma)^{1/2} \tau^{-1} \mathcal{M}(\gamma)^{1/2} \otimes \rho^T (\gamma^T)^{-1}) \sqrt{C_{\mathcal{M}}}^{-1} \nonumber\\
        & =\sqrt{C_{\mathcal{M}}} ( \ln \mathcal{M}(\gamma)^{1/2} \tau^{-1} \mathcal{M}(\gamma)^{1/2} \otimes \rho^T (\gamma^T)^{-1}) \sqrt{C_{\mathcal{M}}}^{-1} \nonumber\\
        & =\sqrt{C_{\mathcal{M}}} \left( \ln \mathcal{M}(\gamma)^{1/2} \tau^{-1} \mathcal{M}(\gamma)^{1/2} \otimes \openone + \openone \otimes \ln \rho^T (\gamma^T)^{-1} \right) \sqrt{C_{\mathcal{M}}}^{-1}  \label{eq:tQFtQR_sep}
    \end{align}
    
Notice that ${}^tQ_R^\gamma$ being full-rank implies $C_{\mM}$ and $\gamma$ being full-rank. Putting \cref{eq:tQFtQR_sep} into the definition of $\DBS$, the left-hand side of \cref{eq:DBS-D=DBS} is
     \begin{align}
        &\phantom{=}\ \DBS\left({}^tQ_F \middle\| {}^tQ_R^{\gamma}\right) \nonumber\\
        &= \Tr[{}^tQ_F\ \ln {}^tQ_F ({}^tQ_R^{\gamma})^{-1} ] \nonumber \\
        &= \Tr\left[ (\openone \otimes \rho^T) C_{\mathcal{M}} \left( \ln \mathcal{M}(\gamma)^{1/2} \tau^{-1} \mathcal{M}(\gamma)^{1/2} \otimes \openone + \openone \otimes \ln \rho^T (\gamma^T)^{-1} \right)   \right] \nonumber \\
        &= \Tr[ \Tr_B[ (\openone \otimes \rho^T) C_{\mathcal{M}} ]\ \ln \rho^T (\gamma^T)^{-1} ] + \Tr[ \Tr_A[ (\openone \otimes \rho^T) C_{\mathcal{M}} ]\ \ln \mathcal{M}(\gamma)^{1/2} \tau^{-1} \mathcal{M}(\gamma)^{1/2}  ] \nonumber \\
        &= \Tr[ \rho^T \ \ln \rho^T(\gamma^T)^{-1} ] + \Tr[ \mathcal{M}(\rho)\ \ln \mathcal{M}(\gamma)^{1/2} \tau^{-1} \mathcal{M}(\gamma)^{1/2}  ] \nonumber \\
        &=\DBS(\rho\|\gamma) - \Tr[\mathcal{M}(\rho) \ \ln \mathcal{M}(\gamma)^{-1/2}\tau\mathcal{M}(\gamma)^{-1/2}]
    \end{align}
    Recall that we choose the input of the reverse process to be $\tau = \mM(\rho)$. Noting that $\mM(\rho)$ commutes with $\mM(\gamma)$, the second term equals to
    \begin{align}
        & \phantom{=} \ \Tr[\mathcal{M}(\rho) \ \ln \mathcal{M}(\gamma)^{-1/2}\tau\mathcal{M}(\gamma)^{-1/2}] \nonumber\\
        &=\Tr[\mM(\rho) (\ln \mM(\rho) - \ln\mM(\gamma))] \nonumber\\
        &= D(\mM(\rho)\|\mM(\gamma)) \,.
    \end{align}
    This proves \cref{eq:DBS-D=DBS}.

\section{Proof of properties \ref{item:first_essential_property}-\ref{item:last_essential_property}} \label{app:essential_properties}

Since $S_{\mathsf{M},u}^{\rm clax}(\rho) = S_{\mathsf{M}}(\rho)$ and $[\rho,u]=0$, property \ref{item:property_u} is a special case of property \ref{item:property_clax}, so we prove \ref{item:property_clax} directly.

When $[\rho,\gamma]=0$, $D(\rho\|\gamma)$ and $\DBS(\rho\|\gamma)$ are both equal to the relative entropy between the eigenvalues of $\rho$ and $\gamma$. By comparing their definitions \cref{Scl1,SM1,SM2}, we obtain $\SUmegaki(\rho) = \SBS(\rho) = \Scl(\rho)$.

For $\SDQ$, we further need to use that $[\rho,\Pi_y]=[\gamma,\Pi_y] = 0$. This condition indicates that $C_{\mM},(\openone\otimes\rho),(\openone\otimes\gamma)$ all commute, and therefore $Q_F = (\openone\otimes\rho^T)C_{\mM}$, $Q_R^\gamma=(\mM(\rho)\mM(\gamma)^{-1} \otimes \gamma^T) C_{\mM}$, and
\begin{align}
    \SigmaDQ(\rho)  & = D(Q_F \| Q_R^\gamma)  \nonumber \\
    & = \Tr[Q_F (\ln Q_F - \ln Q_R^\gamma)] \nonumber\\
    & = \Tr[Q_F (\ln (\openone\otimes\rho^T) + \ln C_{\mM})]  - \Tr\left[Q_F\left(\ln (\mM(\rho)\mM(\gamma)^{-1} \otimes \gamma^T) + \ln C_{\mM} \right)\right] \nonumber \\
    & = \Tr\left[Q_F \left(\openone \otimes (\ln \rho^T - \ln \gamma^T)\right)\right]  -\Tr[Q_F (\ln \mM(\rho)\mM(\gamma)^{-1} \otimes \openone)] \nonumber \\
    & =  D(\rho\|\gamma)-D(\mM(\rho)\|\mM(\gamma)) \nonumber\\
    & = \Scl(\rho) - S(\rho)
\end{align}
Notice that the above equality holds as long as the support of $Q_F$ is contained in that of $Q_R^\gamma$ without assuming $Q_R^\gamma$ to be invertible, since the Umegaki relative entropy is continuous with respect to both arguments \cite{bluhm2023continuity}.
Therefore, $\SDQ(\rho) = \Scl(\rho)$ and  property \ref{item:property_clax} holds for $j=\numDQ$.

Property \ref{item:property_lower_bound} is equivalent to say that $\Sigma_{\mathsf{M},\gamma}^{(j)}$ is non-negative.

$\SigmaDQ(\rho)=D(Q_F\|Q_R^\gamma)$ is non-negative by the non-negativity of relative entropy between two unit-trace positive operators.

$\SigmaU(\rho)=D(\rho\|\gamma) - D(\mM(\rho)\|\mM(\gamma))$ is non-negative by the data-processing inequality of Umegaki relative entropy. Last, by \cref{eq:Ad_between_U_BS}, $\SigmaBS(\rho)  \geq \SigmaU(\rho) \geq 0$.

\section{Proof of Petz recovery criteria \ref{item:property_Petz}}\label{app:Petz}

We first show that both $\SUmegaki$ and $\SBS$ are equal to $S(\rho)$ if and only if $\mMR^\gamma(\mM(\rho)) =\rho$. 
The $j=\numDQ$ case is addressed later. 

\subsection{Property \ref{item:property_Petz} of $\SUmegaki$}
The property $D(\rho\|\gamma) = D(\mM(\rho)\|\mM(\gamma)) \Leftrightarrow \mMR^\gamma(\mM(\rho)) =\rho$  is shown, for the larger family called $f$-divergences, by Theorem 5.1 in Ref. \cite{hiai2011quantum} and Theorem 3.18 in Ref. \cite{hiai2017different}. Therefore, the property \ref{item:property_Petz} for $\SUmegaki$ is proved since $\SUmegaki(\rho)=S(\rho)$ is equivalent to $D(\rho\|\gamma) = D(\mM(\rho)\|\mM(\gamma))$.

\subsection{Property \ref{item:property_Petz} of $\SBS$}
For $\SBS$, first assume $\mMR^\gamma(\mM(\rho)) =\rho$. By the data-processing inequality of the Belavkin-Staszewski relative entropy \cite{hiai2017different,bluhm2020strengthened},
\begin{align}
    \DBS(\rho\|\gamma) &\geq \DBS(\mM(\rho)\|\mM(\gamma))  \nonumber\\
    &\geq \DBS\left(\mMR^\gamma(\mM(\rho)) \middle\| \mMR^\gamma(\mM(\gamma) ) \right) \nonumber\\
    &= \DBS(\rho\|\gamma) \,,
\end{align}
where $\mMR^\gamma(\mM(\gamma)) = \gamma$ by definition. 
Therefore, $\DBS(\rho\|\gamma) =\DBS(\mM(\rho)\|\mM(\gamma))$ and thus $\SBS(\rho) = S(\rho)$.

On the other hand, if $\SBS(\rho) = S(\rho)$, by \cref{eq:Ad_between_U_BS} and property \ref{item:property_lower_bound}, $\SBS(\rho)\geq\SUmegaki(\rho)\geq S(\rho)$, one infers $\SUmegaki(\rho)=S(\rho)$, and therefore $\mMR^\gamma(\mM(\rho)) =\rho$ by property \ref{item:property_Petz} of $\SUmegaki$.

\subsection{Property \ref{item:property_Petz} of $\SDQ$}

Next, we prove property \ref{item:property_Petz} for $\SDQ$. Before that, we notice that the Petz recovery condition $\mMR^\gamma(\mM(\rho)) = \rho$ indicates that $[\rho,\gamma]=0$.
\begin{lemma} \label{lem:Petz_commute}
    For a measurement  channel $\mM$, $\mMR^\gamma(\mM(\rho)) = \rho$ implies $[\rho,\gamma]=0$.
\end{lemma}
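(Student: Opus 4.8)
The plan is to unpack the measurement-channel structure explicitly and show that the Petz recovery condition forces $\rho$ to be block-diagonal with respect to the partition of the spectrum of $\mM(\gamma)$, which in turn forces $[\rho,\gamma]=0$. First I would write $\mM(\rho)=\sum_y p_y\ketbra{y}$ and $\mM(\gamma)=\sum_y G_y\ketbra{y}$ with $G_y=\Tr[\Pi_y\gamma]$, and use the explicit form of the Petz map for a measurement channel derived earlier in the excerpt, namely $\mMR^\gamma(\mM(\rho))=\sum_y \frac{p_y}{G_y}\sqrt{\gamma}\,\Pi_y\,\sqrt{\gamma}$. So the Petz recovery hypothesis becomes the operator equation
\begin{align}
\rho=\sqrt{\gamma}\Bigl(\sum_y \tfrac{p_y}{G_y}\Pi_y\Bigr)\sqrt{\gamma}\;.
\end{align}
Writing $X:=\sum_y \tfrac{p_y}{G_y}\Pi_y$, which is a positive operator, this says $\rho=\sqrt{\gamma}\,X\,\sqrt{\gamma}$.

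Next I would exploit that $X$ is built from the POVM elements in a way tied to $\gamma$: observe that $\mM(\gamma)$ evaluated through the same coefficients gives $\sum_y \tfrac{p_y}{G_y}\Tr[\Pi_y\gamma]=\sum_y p_y=1=\Tr\rho$, which is just a consistency check. The real point is to show $[\rho,\gamma]=0$. The cleanest route is to show that $\rho$ commutes with $\sqrt{\gamma}$. From $\rho=\sqrt{\gamma}\,X\,\sqrt{\gamma}$ we get $\sqrt{\gamma}^{-1}\rho\,\sqrt{\gamma}^{-1}=X$ (using full-rank $\gamma$, which follows from $\mMR^\gamma$ being well defined), and $X$ is a function only of the $\Pi_y$. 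I would then test this against eigenspaces of $\gamma$: take $\ket{g}$ an eigenvector of $\gamma$ with eigenvalue $g$; one wants $\rho\ket{g}$ again proportional to an eigenvector-combination with the same eigenvalue. Concretely, consider $\gamma\rho-\rho\gamma$; substituting $\rho=\sqrt\gamma X\sqrt\gamma$ gives $\gamma^{3/2}X\gamma^{1/2}-\gamma^{1/2}X\gamma^{3/2}=\gamma^{1/2}(\gamma X-X\gamma)\gamma^{1/2}$, so $[\rho,\gamma]=0$ iff $[\gamma,X]=0$. Hence the whole problem reduces to showing $[\gamma,X]=0$, i.e. that $\gamma$ commutes with $\sum_y \tfrac{p_y}{G_y}\Pi_y$.

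To close that, I would use positivity and trace constraints more carefully: $\rho=\sqrt\gamma X\sqrt\gamma$ is a valid density operator, and one can compute $\Tr[\rho\,f(\gamma)]$ for test functions $f$ and compare against $\Tr[X\,\gamma^{1/2}f(\gamma)\gamma^{1/2}]=\Tr[X\,\gamma f(\gamma)]$. I expect the key identity to come from feeding the recovery condition back on itself: applying $\mMR^\gamma\circ\mM$ is idempotent-like behavior on the recovered state, and since $\mM(\rho)$ and $\mM(\gamma)$ commute automatically (both diagonal in $\{\ket y\}$), the saturation of data processing for the measurement channel is what pins down the structure. Actually the slickest argument is: the Petz map satisfies $\mMR^\gamma(\mM(\gamma))=\gamma$ always, and if also $\mMR^\gamma(\mM(\rho))=\rho$, then by linearity $\mMR^\gamma(\mM(\rho-t\gamma))=\rho-t\gamma$ for all $t$; choosing $t$ so that $\rho-t\gamma$ has a definite sign on pieces and invoking that $\mMR^\gamma\circ\mM$ is a (sub)unital channel whose fixed-point set is an algebra forces $\rho$ into the multiplicative domain, which for a measurement channel's Petz recovery is exactly the commutant structure yielding $[\rho,\gamma]=0$.

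The main obstacle I anticipate is making the "fixed-point set is an algebra" step fully rigorous in this concrete setting without quoting heavy machinery: one needs that $\mMR^\gamma\circ\mM$ being the identity on $\rho$ (and on $\gamma$) implies it is the identity on the $*$-algebra generated by $\rho$ and $\gamma$, and then that membership of $\rho$ in the recovery-fixed-point algebra of a measurement channel forces commutation with $\gamma$. The elementary alternative — diagonalizing $\gamma$, writing $\rho=\sqrt\gamma X\sqrt\gamma$ in that basis, and showing the off-block-diagonal parts must vanish by matching the positive combination $X=\sum_y(p_y/G_y)\Pi_y$ against the constraint that $\rho$ be exactly recovered — is more hands-on but requires care about whether distinct POVM elements can "mix" eigenspaces of $\gamma$ without breaking the equality; I would check this by plugging the candidate solution into $\mM$ and verifying $\Tr[\Pi_y\rho]=p_y$ is not enough unless the commutation holds, i.e. deriving a contradiction from assuming a nonzero off-diagonal block. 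I would present whichever of these two is shorter once the details are in hand; the operator-identity reduction $[\rho,\gamma]=0\iff[\gamma,X]=0$ is the pivot either way.
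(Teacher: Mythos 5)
Your opening reduction is fine: writing the recovery condition as $\rho=\sqrt{\gamma}\,X\,\sqrt{\gamma}$ with $X=\sum_y (p_y/G_y)\Pi_y$ and observing that $[\rho,\gamma]=0\iff[\gamma,X]=0$ is correct for full-rank $\gamma$. But the step that actually constitutes the lemma --- proving $[\gamma,X]=0$ --- is never carried out. Your ``slickest argument'' (linearity, fixed points of $\mMR^\gamma\circ\mM$ forming an algebra, multiplicative domains) does not, as stated, use the fact that $\mM$ is a \emph{measurement} channel, and the lemma is simply false without that hypothesis: for $\mM=\mathrm{id}$ one has $\mMR^\gamma=\mathrm{id}$, so every $\rho$ is perfectly recovered while $[\rho,\gamma]\neq 0$ generically. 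More generally, Petz's structure theorem for sufficient channels only yields a block decomposition $\rho=\bigoplus_n q_n\,\rho_n^L\otimes\omega_n$, $\gamma=\bigoplus_n g_n\,\gamma_n^L\otimes\omega_n$, which does not force commutation; the commutativity must be extracted from the classicality of the output, and your sketch mentions this (``$\mM(\rho)$ and $\mM(\gamma)$ commute automatically'') only as an expectation. The ``elementary alternative'' of matching off-diagonal blocks is likewise flagged by you as unchecked, and it is exactly where the difficulty lives. So the proposal as written has a genuine gap at its central step.

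For comparison, the paper closes this gap by a divergence argument rather than an operator-algebraic one: Petz recovery saturates the data-processing inequality for \emph{both} the Umegaki and the Belavkin--Staszewski relative entropies, giving $D(\rho\|\gamma)=D(\mM(\rho)\|\mM(\gamma))$ and $\DBS(\rho\|\gamma)=\DBS(\mM(\rho)\|\mM(\gamma))$; since the outputs $\mM(\rho)$ and $\mM(\gamma)$ commute, the two right-hand sides coincide, hence $D(\rho\|\gamma)=\DBS(\rho\|\gamma)$, and by Theorem~4.3 of Ref.~\cite{hiai2017different} this equality holds iff $[\rho,\gamma]=0$. Note how the classicality of the measurement output enters essentially there. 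If you want to salvage your route, you would need to prove directly that the only positive solutions of $\rho=\sqrt{\gamma}\bigl(\sum_y \Tr[\Pi_y\rho]\,\Pi_y/\Tr[\Pi_y\gamma]\bigr)\sqrt{\gamma}$ have $[\gamma,\sum_y(p_y/G_y)\Pi_y]=0$, which is not a routine computation.
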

\begin{proof}
    By property \ref{item:property_Petz} of $\SUmegaki$ and $\SBS$, one has $\SigmaU(\rho) = \SigmaBS(\rho) = 0$. That is,
    \begin{align}
        D(\rho\|\gamma) - D(\mM(\rho)\|\mM(\gamma)) &= \DBS(\rho\|\gamma) - D(\mM(\rho)\|\mM(\gamma)) \nonumber\\
        D(\rho\|\gamma)  &= \DBS(\rho\|\gamma)
    \end{align}
    Because $D(\rho\|\gamma)  = \DBS(\rho\|\gamma)$ if and only if $[\rho,\gamma]=0$ (Ref. \cite{hiai2017different} Theorem 4.3), one has $[\rho,\gamma]=0$.
\end{proof}

Now, we prove property \ref{item:property_Petz}.
We first prove the ``only if'' part.
    
Suppose $S_{\mathsf{M},\gamma}^{(2)}=S(\rho)$. This is equivalent to that $\SigmaDQ(\rho) = D(Q_F\| Q_R^\gamma) = 0$, which is equivalent to that $Q_F = Q_R^\gamma$. Taking the partial trace over system $B$, we get $\rho^T = \Tr_B[Q_F] = \Tr_B[Q_R^\gamma] = \mMR^\gamma(\mM(\rho))^T$. Therefore,  $\mMR^\gamma(\mM(\rho)) = \rho$.

For the ``if'' part, we will show that $\mMR^\gamma(\mM(\rho)) = \rho$ implies $Q_F = Q_R^\gamma$, which is equivalent to $S_{\mathsf{M},\gamma}^{(2)}=S(\rho)$. 

Suppose $\mMR^\gamma(\mM(\rho)) = \rho$.
By \cref{lem:Petz_commute}, $[\rho,\gamma]=0$, and thus we can diagonalize them in the same basis:
\begin{align}
    \rho &= \sum_x \lambda_x \ketbra{\psi_x} \,, \\
    \gamma &= \sum_x \gamma_x \ketbra{\psi_x} \,.
\end{align}
Next, we construct a new POVM by taking the diagonal elements of the POVM $\mathsf{M}=\{\Pi_y\}$ in the above basis:
\begin{align}
    \mathsf{M}'\coloneqq\{\Pi_y'\},~~\Pi_y' \coloneqq \sum_x \ketbra{\psi_x} \Pi_y \ketbra{\psi_x} \,.
\end{align}
Let $\mM'(\rho) \coloneqq \sum_y \Tr[\Pi_y' \rho] \ketbra{y}$ be the measurement channel and $\mMR'^\gamma$ be its Petz map with reference $\gamma$.
Notice that $\Tr[\Pi_y\rho] = \Tr[\Pi_y'\rho]$, $\Tr[\Pi_y\gamma] = \Tr[\Pi_y'\gamma]$, and therefore $\mM'(\rho) = \mM(\rho)$, $\mM'(\gamma) = \mM(\gamma)$. 

Since $[\rho,\gamma] = [\rho,\Pi'_y] = [\gamma,\Pi'_y]=0$, using property \ref{item:property_clax} for $\mM'$, one has
\begin{align} 
    S_{\mathsf{M}',\gamma}^{\textrm{clax}}(\rho) -S(\rho) &= D(\rho\|\gamma) - D(\mM(\rho)\|\mM(\gamma)) \nonumber \\
    &= D(Q_{F'}\|Q_{R'}^\gamma) \,, \label{eq:clax_Mprime}
\end{align}
where $Q_{F'}\coloneqq (\openone \otimes \rho^T) C_{\mM'}$ and $Q_{R'}^\gamma \coloneqq (\mM'(\rho) \mM'(\gamma)^{-1} \otimes \gamma^T) C_{\mM'}$ are the representations of $\mM'$ and $\mMR'^\gamma$, which are simplified using the commutativity of $\rho$,$\gamma$ and $\Pi_y'$.

Note that the expression \cref{eq:clax_Mprime} is the same as $\SigmaU(\rho) = D(\rho\|\gamma) - D(\mM(\rho)\|\mM(\gamma))$. 
Since $\mMR^\gamma(\mM(\rho)) = \rho$, by property \ref{item:property_Petz} for $\SUmegaki$, we have  $\SigmaU(\rho) = 0$.
Combining this with \cref{eq:clax_Mprime} indicates that $D(Q_{F'}\|Q_{R'}^\gamma)=0$ and  $Q_{F'}=Q_{R'}^\gamma$. Expanding this with respect to the basis $\{ \ket{y} \otimes \ket{\psi_x} \}$, one get
\begin{align}
    \bra{y}\bra{\psi_x}Q_{F'}\ket{y}\ket{\psi_x}&=\bra{y}\bra{\psi_x}Q_{R'}^\gamma \ket{y}\ket{\psi_x} \\
    \varphi(y|x) \lambda_x& = \frac{\varphi(y|x)\gamma_x \Tr[\Pi'_y \rho]}{\Tr[\Pi'_y \gamma]}, ~\forall x, y  \label{eq:lambda_prop_gamma}
\end{align}
where $\varphi(y|x) = \bra{\psi_x}\Pi'_y\ket{\psi_x} = \bra{\psi_x}\Pi_y\ket{\psi_x}$.

Now, fix $y$, and consider the support of $\Pi'_y$, which is spanned by all $\ket{\psi_x}$ such that $\varphi(y|x)\neq 0$. In this subset of $x$, one could cancel out $\varphi(y|x)$ in both sides and \cref{eq:lambda_prop_gamma} becomes
\begin{align}
    \lambda_x = \frac{ \Tr[\Pi'_y \rho]}{\Tr[\Pi'_y \gamma]} \gamma_x, ~\text{for~} x:\ket{\psi_x}\in\text{Supp}(\Pi'_y) \,,
\end{align}
where $\text{Supp}(\Pi'_y)$ denotes the support of $\Pi'_y$.  Taking the square root of this equation, one gets
\begin{align}
    \sqrt{\lambda_x} = \sqrt{\frac{ \Tr[\Pi'_y \rho]}{\Tr[\Pi'_y \gamma]}}\sqrt{\gamma_x}, ~\text{for~} x:\ket{\psi_x}\in\text{Supp}(\Pi'_y) \,. \label{eq:sqrt_lambda_gamma}
\end{align}

Define $(\Pi_y')^0\coloneqq \sum_{x:\varphi(y|x)\neq 0}\ketbra{\psi_x}$ as the projector onto $\text{Supp}(\Pi'_y)$.
The above equation can be rewritten as
\begin{align} \label{eq:sqrt_rho_sqrt_gamma}
    \sqrt\rho (\Pi_y')^0   = \sqrt{\frac{ \Tr[\Pi'_y \rho]}{\Tr[\Pi'_y \gamma]}} \sqrt\gamma (\Pi_y')^0 \,,
\end{align}
since multiplication with $(\Pi_y')^0$ selects the eigenvectors of $\sqrt{\rho}$ and $\sqrt{\gamma}$  in $\text{Supp}(\Pi'_y)$, which are $\sqrt{\lambda_x}$ and $\sqrt{\gamma_x}$ in \cref{eq:sqrt_lambda_gamma}. 

Since $\Pi_y$ is positive, its off-diagonal elements are cross terms in $\text{Supp}(\Pi'_y)$:
\begin{align}
    \Pi_y  = \Pi'_y + \sum_{\substack{x_1,x_2:  x_1\neq x_2, \\\ket{\psi_{x_1}},\ket{\psi_{x_2}}\in\text{Supp}(\Pi_y')}} \pi_{y,x_1,x_2} \ket{\psi_{x_1}}\bra{\psi_{x_2}}
\end{align}
for some complex numbers $\pi_{y,x_1,x_2}$. Therefore, $\Pi_y (\Pi_y')^0 = (\Pi_y')^0\Pi_y = \Pi_y$.

By this and \cref{eq:sqrt_rho_sqrt_gamma},
\begin{align}\label{eq:srPisr}
    \sqrt{\rho} \Pi_y \sqrt{\rho} &= \sqrt{\rho} (\Pi_y')^0 \Pi_y (\Pi_y')^0  \sqrt{\rho} \nonumber\\
    & = \frac{ \Tr[\Pi'_y \rho]}{\Tr[\Pi'_y \gamma]}\sqrt{\gamma} (\Pi_y')^0 \Pi_y (\Pi_y')^0  \sqrt{\gamma} \nonumber \\
    & = \frac{ \Tr[\Pi'_y \rho]}{\Tr[\Pi'_y \gamma]}\sqrt{\gamma} \Pi_y \sqrt{\gamma} 
\end{align}

Last, by $\Tr[\Pi_y\rho] = \Tr[\Pi_y'\rho]$, $\Tr[\Pi_y\gamma] = \Tr[\Pi_y'\gamma]$ and \cref{eq:srPisr}, one gets
\begin{align}
    Q_F & = (\openone \otimes \sqrt{\rho^T}) C_{\mM} (\openone \otimes \sqrt{\rho^T}) \nonumber \\
    & = \sum_y \ketbra{y} \otimes \sqrt{\rho^T} \Pi_y^T \sqrt{\rho^T} \nonumber \\
    & = \sum_y \ketbra{y} \otimes\frac{ \Tr[\Pi'_y \rho]}{\Tr[\Pi'_y \gamma]}\sqrt{\gamma^T} \Pi_y^T \sqrt{\gamma^T} \nonumber \\
    & = \sum_y \ketbra{y} \otimes\frac{ \Tr[\Pi_y \rho]}{\Tr[\Pi_y \gamma]}\sqrt{\gamma^T} \Pi_y^T \sqrt{\gamma^T} \nonumber \\
    & = (\mM(\rho)\mM(\gamma)^{-1} \otimes \sqrt{\gamma^T}) C_\mM (\openone \otimes \sqrt{\gamma^T}) \nonumber \\
    & = Q_R^\gamma
\end{align}
Therefore, $Q_F = Q_R^\gamma$, and $S_{\mathsf{M},\gamma}^{(2)}=S(\rho)$.

\section{Proof of monotonicity under stochastic post-processing \ref{item:property_monotonicity}} \label{app:stochastic}

Let $\mathcal{W}$ be the linear map describing the post-processing $w$, which satisfies
\begin{align}
    \mathcal{W}(\ketbra{y}) &= \sum_z w_{zy} \ketbra{z} \\
    \mathcal{W}(\ket{y}\!\!\!\;\bra{y'})& = 0, \quad \text{for~} y\neq y' \,.
\end{align}

Since $w$ is a stochastic matrix, $\mathcal{W}$ is completely positive and trace-preserving. The measurement channel of $\mathsf{M}'$ is then described by
\begin{align}
    (\mathcal{P}\circ\mathcal{M}) (\rho) &= \sum_y \Tr[\Pi_y \rho] \mathcal{P}(\ketbra{y}) \nonumber \\
    &= \sum_z \Tr[\Pi_z' \rho] \ketbra{z}
\end{align}

Property \ref{item:property_monotonicity} is equivalent to that $S^{(j)}_{\mathsf{M}',\gamma}(\rho) -S^{(j)}_{\mathsf{M},\gamma}(\rho) \geq 0$. For $j=\numU,\numBS$, they have the same form:
\begin{align}
    &\phantom{=} S^{(j)}_{\mathsf{M}',\gamma}(\rho) -S^{(j)}_{\mathsf{M},\gamma}(\rho) \nonumber\\
    &= D(\mM(\rho)\|\mM(\gamma)) -D((\mathcal{P}\circ\mM)(\rho) \| (\mathcal{P}\circ\mM)(\gamma)) \nonumber\\
    & \geq 0
\end{align}
The inequality is due to the data-processing inequality of relative entropy.

\section{Proof of \cref{eq:ineq_12}} \label{app:proof_ineq_12}
By definition of $Q_F$ and $Q_R^\gamma$ in Eqs.~\eqref{eq:Q_F} and \eqref{eq:Q_R},
\begin{align}
    Q_F &= \sum_y \ketbra{y} \otimes \sqrt{\rho^T}\Pi_y^T\sqrt{\rho^T} \\
    Q_R &=  \sum_y \ketbra{y} \otimes \frac{\Tr[\Pi_y \rho]}{\Tr[\Pi_y \gamma]} \sqrt{\gamma^T}\Pi_y^T\sqrt{\gamma^T}
\end{align}
and thus
\begin{align}
    D(Q_F\|Q_R) &= \Tr\left[\sum_y \ketbra{y} \otimes \sqrt{\rho^T}\Pi_y^T\sqrt{\rho^T} \left(\ln\sqrt{\rho^T}\Pi_y^T\sqrt{\rho^T} - \ln\frac{\Tr[\Pi_y \rho]\sqrt{\gamma^T}\Pi_y^T\sqrt{\gamma^T}}{\Tr[\Pi_y \gamma]}\right) \right] \nonumber \\
    &=\sum_y  \Tr\left[ \sqrt{\rho}\Pi_y\sqrt{\rho} \left( \ln\frac{\sqrt{\rho}\Pi_y\sqrt{\rho}}{\Tr[\Pi_y \rho]} - \ln\frac{\sqrt{\gamma}\Pi_y\sqrt{\gamma}}{\Tr[\Pi_y \rho]} - (\ln \Tr[\Pi_y \rho] - \ln \Tr[\Pi_y \gamma])\openone \right) \right] \nonumber \\
    & = \sum_y D\left(\frac{\sqrt{\rho}\Pi_y\sqrt{\rho}}{\Tr[\Pi_y \rho]}\middle\| \frac{\sqrt{\gamma}\Pi_y\sqrt{\gamma}}{\Tr[\Pi_y \rho]} \right) - \sum_y\Tr[\sqrt{\rho}\Pi_y\sqrt{\rho}](\ln \Tr[\Pi_y\rho] - \ln\Tr[\Pi_y\gamma]) \nonumber \\
    & \geq D\left(\sum_y \frac{\sqrt{\rho}\Pi_y\sqrt{\rho}}{\Tr[\Pi_y \rho]}\middle\| \sum_y \frac{\sqrt{\gamma}\Pi_y\sqrt{\gamma}}{\Tr[\Pi_y \rho]} \right) - D(\mathcal{M}(\rho) \| \mathcal{M}(\gamma)) \nonumber \\
    & = D(\rho\|\gamma) - D(\mathcal{M}(\rho) \| \mathcal{M}(\gamma)) \label{eq:proof_ineq_12}
\end{align}
where the inequality comes from the joint convexity of the Umegaki relative entropy. Similar inequalities hold for any other relative entropies that satisfy the joint convexity, such as the Belavkin-Staszewski one. Adding $S(\rho)$ to both sides of \cref{eq:proof_ineq_12} gives \cref{eq:ineq_12}.

\end{document}